\newcommand{\version}{long}
\newclass{\ioPSPACE}{i.o.\text{-}PSPACE}
\newlang{\Halt}{Halt}
\newcommand{\ifft}{\ifundef{\shortiff}{if and only if }{iff }}
\newcommand{\spc}{{ }}
\newcommand{\inpuborpriv}[2]{\ifdef{\priv}{#2}{#1}}
\newcommand{\inpub}[1]{\inpuborpriv{#1}{}}
\newcommand{\inpriv}[1]{\inpuborpriv{}{#1}}
\newcommand{\indraft}[1]{\ifthenelse{\equal{\version}{draft}}{#1}{}}
\newcommand{\infinal}[1]{\ifthenelse{\equal{\version}{final}}{#1}{Only shown in the final version}}
\newcommand{\inshort}[1]{\ifthenelse{\equal{\version}{short}}{#1}{}}
\newcommand{\inlong}[1]{\ifthenelse{\equal{\version}{long}}{#1}{}}
\newcommand{\classorcat}{\inpuborpriv{class}{category}}
\newcommand{\appref}[1]{Appendix~\ref{app:#1}}
\newcommand{\secref}[1]{Section~\ref{sec:#1}}
\newcommand{\renameenv}[2]{
  \expandafter\let\csname #1#2\expandafter\endcsname
  \csname #1\endcsname
  \expandafter\let\csname end#1#2\expandafter\endcsname
  \csname end#1\endcsname
  \expandafter\let\csname #2\endcsname\relax
  \expandafter\let\csname end#2\endcsname\relax}
\ifundef{\defaultlists}{
  \usepackage[inline,shortlabels]{enumitem}
  \setenumerate[1]{(a),itemsep=0pt,topsep=3pt,parsep=0pt,partopsep=0pt}
  \setenumerate[2]{(i),noitemsep,topsep=3pt,parsep=0pt,partopsep=0pt}
  \setenumerate[3]{(A),noitemsep,topsep=3pt,parsep=0pt,partopsep=0pt}
  \setenumerate[4]{(I),noitemsep,topsep=3pt,parsep=0pt,partopsep=0pt}
  \setitemize{noitemsep,topsep=3pt,parsep=0pt,partopsep=0pt}
  \setdescription{noitemsep,topsep=3pt,parsep=0pt,partopsep=0pt}
  \setlist{noitemsep,topsep=3pt,parsep=0pt,partopsep=0pt}}{}
\newcolumntype{x}[1]{>{\centering\arraybackslash}m{#1}}
\let\eqref\relax
\DeclareFontFamily{U}{mathx}{\hyphenchar\font45}
\DeclareFontShape{U}{mathx}{m}{n}{
      <5> <6> <7> <8> <9> <10>
      <10.95> <12> <14.4> <17.28> <20.74> <24.88>
      mathx10
      }{}
\DeclareSymbolFont{mathx}{U}{mathx}{m}{n}
\DeclareMathSymbol{\bigtimes}{1}{mathx}{"91}
\algnewcommand{\Input}{\item[\textbf{Input:}]}
\algnewcommand{\Output}{\item[\textbf{Output:}]}
\newcommand{\comp}{\mathbf{Comp}}
\newcommand{\comptree}{\mathbf{CompTree}}
\newcommand{\bigleafprod}{\bigodot}
\newcommand{\leafprod}{\odot}
\newcommand\restr[2]{{
  \left.\kern-\nulldelimiterspace
  #1
  \vphantom{\big|}
  \right|_{#2}
  }}
\newcommand{\grp}{\mathbf{Grp}}
\newcommand{\graph}{\mathbf{Graph}}
\newcommand{\id}{\mathrm{id}}
\newcommand{\concat}{:}
\newcommand{\abs}[1]{\left\vert#1\right\vert}
\newcommand{\can}{\mathrm{Can}}
\newcommand{\cay}{\mathrm{Cay}}
\renewcommand{\hom}{\mathrm{Hom}}
\newcommand{\iso}{\mathrm{Iso}}
\newcommand{\rank}{\mathrm{rank}}
\newcommand{\setb}[2]{\left\{#1 \;\middle|\; #2\right\}}
\newcommand{\nth}[1]{\ensuremath{{#1}^{\mathrm{th}}}}
\newcommand{\thmref}[1]{Theorem~\ref{thm:#1}}
\newcommand{\lemref}[1]{Lemma~\ref{lem:#1}}
\newcommand{\defref}[1]{Definition~\ref{defn:#1}}
\newcommand{\propref}[1]{Proposition~\ref{prop:#1}}
\newcommand{\corref}[1]{Corollary~\ref{cor:#1}}
\newcommand{\eqref}[1]{(\ref{eq:#1})}
\newcommand{\eps}{\epsilon}
\newcommand{\bmg}{\mathbf{g}}
\newcommand{\bmh}{\mathbf{h}}
\newcommand{\cC}{\mathcal{C}}
\newcommand{\cD}{\mathcal{D}}
\newcommand{\cG}{\mathcal{G}}
\newcommand{\cH}{\mathcal{H}}
\newcommand{\la}{\leftarrow}
\newcommand{\ra}{\rightarrow}
\newcommand{\tril}{\triangleleft}
\newcommand{\trile}{\trianglelefteq}
\def\thm@space@setup{\thm@preskip=3pt \thm@postskip=3pt}
\renewenvironment{proof}[1][\proofname]{\par
  \pushQED{\qed}
  \normalfont
  \topsep3pt \partopsep0pt 
  \trivlist
  \item[\hskip\labelsep
        \itshape
    #1\@addpunct{.}]\ignorespaces
  }{
    \popQED\endtrivlist\@endpefalse
    \addvspace{0pt plus 0pt} 
  }
\ifundef{\dontnumberwithin}{\declaretheorem[numberwithin=section]{dummy}}{\declaretheorem{dummy}} 
\declaretheorem[sibling=dummy]{theorem}
\declaretheorem[sibling=dummy]{lemma}
\declaretheorem[sibling=dummy]{definition}
\declaretheorem[sibling=dummy]{proposition}
\declaretheorem[sibling=dummy]{corollary}
\ifundef{\defaultthmcontinues}{\renewcommand{\thmcontinues}[1]{}}{}
\begin{document}

\title{Beating the Generator-Enumeration Bound for $p$-Group Isomorphism\footnote{Preliminary versions of this work appeared as portions of~\cite{wagner2011a} and~\cite{rosenbaum2013a}.}}
\author{David J. Rosenbaum \\ {\small Department of Computer Science \& Engineering} \\ {\small University of Washington} \\ {\small Email: djr@cs.washington.edu} \and Fabian Wagner \\ {\small Universit\"{a}t Ulm} \\ {\small Institut f\"{u}r Theoretische Informatik} \\ {\small Email: fabian.wagner@uni-ulm.de}}
\date{December  6, 2013}

\maketitle
\thispagestyle{empty}

\begin{abstract}
  We consider the \emph{group isomorphism problem}: given two finite groups $G$ and $H$ specified by their multiplication tables, decide if $G \cong H$.  For several decades, the $n^{\log_p n + O(1)}$ generator-enumeration bound (where $p$ is the smallest prime dividing the order of the group) has been the best worst-case result for general groups.  In this work, we show the first improvement over the generator-enumeration bound for $p$-groups, which are believed to be the hard case of the group isomorphism problem.  We start by giving a Turing reduction from group isomorphism to $n^{(1 / 2) \log_p n + O(1)}$ instances of $p$-group composition-series isomorphism.  By showing a Karp reduction from $p$-group composition-series isomorphism to testing isomorphism of graphs of degree at most $p + O(1)$ and applying algorithms for testing isomorphism of graphs of bounded degree, we obtain an $n^{O(p)}$ time algorithm for $p$-group composition-series isomorphism.  Combining these two results yields an algorithm for $p$-group isomorphism that takes at most $n^{(1 / 2) \log_p n + O(p)}$ time.  This algorithm is faster than generator-enumeration when $p$ is small and slower when $p$ is large.  Choosing the faster algorithm based on $p$ and $n$ yields an upper bound of $n^{(1 / 2 + o(1)) \log n}$ for $p$-group isomorphism.
\end{abstract}


\inlong{
  \newpage
  \setcounter{page}{1}}

\section{Introduction}
\label{sec:intro}
In the \emph{group isomorphism problem}, we are given two finite groups $G$ and $H$ with element set $[n]$ as multiplication tables and must decide if $G \cong H$.  It is not known if group isomorphism is in \P\spc while its \NP-completeness would contradict the exponential-time hypothesis~\cite{impagliazzo1999a}; The strongest complexity theoretic upper bound is that solvable-group isomorphism is in $\NP \cap \coNP$ assuming that $\EXP \not\subseteq \ioPSPACE$~\cite{arvind2003a}.  Group isomorphism is Karp reducible to graph isomorphism (\GI)~\cite{hedrlin1966a,hedrlin1969a,miller1979a} but is not known to be \GI-complete and may be easier due to the group structure~\cite{chattopadhyay2010a}.  Therefore if group isomorphism is \NP-complete, then the polynomial hierarchy collapses to the second level~\cite{babai1985a,boppana1987a,goldwasser1989a,goldreich1991a}.

The \emph{generator-enumeration algorithm} for group isomorphism has been known for several decades~\cite{felsch1970a} (cf.~\cite{lipton1977a,miller1978a}).  Given a generating set $S$ for a group $G$ we can test if $G$ is isomorphic to a group $H$ in $n^{\abs{S} + O(1)}$ time by considering all $n^{\abs{S}}$ maps from $S$ into $H$ and checking if any of these extend to an isomorphism.  If $p$ is the smallest prime dividing the order of $G$, then $G$ has a generating set of size at most $\log_p n$ which can be found in $n^{\log_p n + O(1)}$ by brute force.  This yields an $n^{\log_p n + O(1)}$ upper bound for group isomorphism.

Lipton, Snyder and Zalcstein~\cite{lipton1977a} proved the related result that group isomorphism can be decided using $O(\log^2 n)$ space.

Faster algorithms have been obtained for various special cases.  Lipton, Snyder and Zalcstein~\cite{lipton1977a}; Savage~\cite{savage1980a}; Vikas~\cite{vikas1996a}; and Kavitha~\cite{kavitha2007a} all showed polynomial-time algorithms for Abelian groups.  Le Gall~\cite{legall2008a} gave a polynomial-time algorithm for groups consisting of a semidirect product of an Abelian group with a cyclic group of coprime order; this was extended to a class of groups with a normal Hall subgroup by Qiao, Sarma and Tang~\cite{qiao2011a}.  Babai and Qiao~\cite{babai2012a} showed that testing isomorphism of groups with Abelian Sylow towers is in polynomial time.  Babai, Codenotti, Grochow and Qiao~\cite{babai2011a} showed an $n^{O(\log \log n)}$ time algorithm for the class of groups with no normal Abelian subgroups; the runtime was later improved to polynomial by Babai, Codenotti and Qiao~\cite{codenotti2011a,babai2012b}.


Our main result is an algorithm that is faster than the generator-enumeration algorithm for $p$-groups, which are believed to be the hard case of group isomorphism~\cite{babai2011a,codenotti2011a,babai2012a}.  Before our work, it was a longstanding open problem to obtain an $n^{(1 - \eps) \log_p n + O(1)}$ algorithm where $\eps > 0$~\cite{lipton2011a}.

\begin{restatable}{theorem}{pgroupiso}
  \label{thm:p-group-iso}
  $p$-group isomorphism is decidable in $n^{\min\{(1 / 2) \log_p n + O(p),\;\log_p n\}}$ time.  In particular, \\ $n^{(1 / 2) \log_p n + O(\log n / \log \log n)}$ and $n^{(1 / 2) \log n + O(1)}$ are upper bounds on the complexity.
\end{restatable}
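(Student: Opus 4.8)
The plan is to run two algorithms and take whichever is faster for the given $p$ and $n$. The first chains the paper's two reductions: the Turing reduction from $p$-group isomorphism to $n^{(1/2)\log_p n + O(1)}$ instances of $p$-group composition-series isomorphism, and the Karp reduction of the latter to isomorphism of graphs of degree $p + O(1)$, which by the classical bounded-degree graph isomorphism algorithm (running in $n^{O(d)}$ time on degree-$d$ graphs) is solvable in $n^{O(p)}$ time; composing these yields a $p$-group isomorphism algorithm running in $n^{(1/2)\log_p n + O(1)} \cdot n^{O(p)} = n^{(1/2)\log_p n + O(p)}$ time. The second is generator-enumeration: every $p$-group of order $n$ has a generating set of size at most $\log_p n$ (Burnside basis theorem), which can be found and whose images into the target group can be enumerated in $n^{\log_p n + O(1)}$ time. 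Since $p$ and $n$ already determine which exponent is smaller, running the faster algorithm gives the bound $n^{\min\{(1/2)\log_p n + O(p),\ \log_p n\}}$.

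For the two refined statements I optimize the choice over $p$ using elementary estimates. The composition-series algorithm beats generator-enumeration exactly when $O(p) \le (1/2)\log_p n$, equivalently $p\log p = O(\log n)$, i.e. $p = O(\log n/\log\log n)$; in that regime its exponent is $(1/2)\log_p n + O(p) = (1/2)\log_p n + O(\log n/\log\log n)$. When instead $p = \omega(\log n/\log\log n)$, then $\log p \ge (1-o(1))\log\log n$, so $\log_p n = O(\log n/\log\log n)$, and generator-enumeration runs in $n^{\log_p n + O(1)} = n^{O(\log n/\log\log n)} \subseteq n^{(1/2)\log_p n + O(\log n/\log\log n)}$. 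Either way $p$-group isomorphism is in $n^{(1/2)\log_p n + O(\log n/\log\log n)}$ time, and since $\log_p n \le \log n$ for every prime $p \ge 2$, this is also at most $n^{(1/2 + o(1))\log n}$.

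For the sharper $n^{(1/2)\log n + O(1)}$ bound I split on whether $p$ is small. If $p \in \{2,3\}$ then $O(p) = O(1)$ and $\log_p n \le \log n$, so the composition-series algorithm runs in $n^{(1/2)\log n + O(1)}$. If $p \ge 5$ then $\log_2 p > 2$, hence $\log_p n = \log n/\log_2 p < (1/2)\log n$, so generator-enumeration alone already runs in $n^{\log_p n + O(1)} \subseteq n^{(1/2)\log n + O(1)}$. Taking the worst case over $p$ gives the bound.

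The substantive difficulty lies not in this theorem but in its two ingredients: producing only $n^{(1/2)\log_p n + O(1)}$ composition-series instances — the square-root improvement over naive generator-enumeration — and engineering the Karp reduction so that the resulting graphs have degree $p + O(1)$, which is what lets the bounded-degree graph isomorphism machinery apply. Granting those, \thmref{p-group-iso} reduces to the case analysis above, whose only content is the three elementary inequalities relating $p$, $\log_p n$, and $\log n$.
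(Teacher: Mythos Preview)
Your proposal is correct and follows essentially the same approach as the paper: combine the composition-series algorithm (Theorems~\ref{thm:group-red-comp} and~\ref{thm:alpha-comp-iso}) with generator-enumeration, take the faster of the two, and then do a case split on $p$ to derive the two refined bounds. Your split for the $n^{(1/2)\log n + O(1)}$ bound at $p\in\{2,3\}$ versus $p\ge 5$ is slightly cleaner than the paper's, which instead argues that the maximum of $(1/2)\log_p n + O(p)$ over $p\le \log n/\log\log n$ occurs at $p=2$; both arguments are elementary and reach the same conclusion.
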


The first step in our algorithm reduces group isomorphism to many instances of composition-series isomorphism.  (Two composition series are isomorphic if there exists an isomorphism that maps each subgroup in the first series to the corresponding subgroup in the second series.)

\begin{restatable}{theorem}{groupredcomp}
  \label{thm:group-red-comp}
  Testing isomorphism of two groups $G$ and $H$ is $n^{(1 / 2) \log_p n + O(1)}$ time Turing reducible to testing isomorphism of composition series for $G$ and $H$ where $p$ is the smallest prime dividing the order of the group.
\end{restatable}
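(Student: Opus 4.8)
The plan is to fix, at the outset, a composition series $\mathcal{G}\colon G = G_0 \triangleright G_1 \triangleright \cdots \triangleright G_k = 1$ of $G$ (computable in polynomial time from the multiplication table) and to reduce ``$G \cong H$'' to the question of whether \emph{some} composition series $\mathcal{H}$ of $H$ satisfies $(G,\mathcal{G}) \cong (H,\mathcal{H})$ as composition series. These are equivalent, since any isomorphism $G \to H$ carries $\mathcal{G}$ to such an $\mathcal{H}$, and the converse is immediate. Recall that the generator-enumeration bound comes from the fact that $G$ has a generating set of size at most $\log_p n$ that is \emph{adapted} to $\mathcal{G}$: taking a bounded-size set of elements of $G_{i-1}$ lifting a generating set of the simple group $G_{i-1}/G_i$ for each $i$, the lift-elements past level $i$ generate $G_i$, and the total size is at most $\log_p n$ because every composition factor has order at least $p$.

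Next I would choose the split. Pick a level $j$ so that the number of lift-elements below level $j$ and the number at or above level $j$ are each at most $(1/2)\log_p n + O(1)$; such a $j$ exists because the two counts sum to at most $\log_p n$ and change by $O(1)$ as $j$ varies. Set $M := G_j$; then $M$ is generated by the at most $(1/2)\log_p n + O(1)$ lift-elements below level $j$, while $G$ is generated by $M$ together with the at most $(1/2)\log_p n + O(1)$ lift-elements at or above level $j$. Write $\mathcal{M}$ for the induced composition series $M = G_j \triangleright G_{j+1} \triangleright \cdots \triangleright G_k = 1$. (For $p$-groups, which is the case that drives \thmref{p-group-iso}, one takes $\mathcal{G}$ to be a chief series, every factor is $\mathbb{Z}/p$, $k = \log_p n$, and $M = G_{\lceil k/2 \rceil}$.)

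Now the meet-in-the-middle. I would enumerate over all tuples of elements of $H$ of the size needed to serve as images of the lift-elements below level $j$ --- at most $n^{(1/2)\log_p n + O(1)}$ of them --- keeping those that extend to an embedding $\psi\colon M \hookrightarrow H$ whose restriction to $\mathcal{M}$ is again a composition series; this is checkable in polynomial time per tuple, so it produces a list $\mathcal{L}$ of at most $n^{(1/2)\log_p n + O(1)}$ ``bottom halves'' $\psi$, each with image $M' = \psi(M) \le H$ and induced target series $\psi(\mathcal{M})$. For each $\psi \in \mathcal{L}$, run a polynomial-time procedure that, using polynomially many composition-series-isomorphism queries, decides whether $\psi$ extends to an isomorphism $G \to H$ --- equivalently, whether $\psi(\mathcal{M})$ can be completed upward to a full composition series $\mathcal{H}$ of $H$ with $(G,\mathcal{G}) \cong (H,\mathcal{H})$. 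Output ``$G \cong H$'' iff some $\psi$ passes. The ``only if'' direction is the point: if $\phi\colon G \to H$ is an isomorphism, then $\phi|_M$ is determined by the images of the below-level-$j$ lift-elements, hence appears in $\mathcal{L}$, and $\phi$ itself witnesses its completion. The running time and number of oracle calls are $|\mathcal{L}| \cdot \mathrm{poly}(n) \le n^{(1/2)\log_p n + O(1)}$, as required.

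The crux --- and where I expect essentially all the difficulty to be --- is the per-$\psi$ extension test: showing that ``does $\psi$ extend to an isomorphism $G \to H$?'' can be settled with only polynomially many composition-series-isomorphism queries, rather than by enumerating the $n^{(1/2)\log_p n}$ candidate top segments, which would inflate the cost back to $n^{\log_p n}$. The intended mechanism is to push $\psi$ up the series one composition factor at a time: passing from a partial isomorphism on $G_i$ to one on $G_{i-1}$ means choosing the images of the $O(1)$ lift-elements at level $i$ subject to the power and conjugation relations they satisfy over $G_i$, and the constraints that cross the split all have the form ``(a fixed word in the top-generator images) $= \psi(\text{a fixed element of } M)$,'' so $\psi$ supplies exactly the interface data the top needs and the top/bottom compatibility can be packaged into composition-series-isomorphism instances. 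The principal technical nuisance is that composition-series terms are in general only subnormal, not normal, so one cannot pass to quotients $G/G_i$ and must argue throughout at the level of partial isomorphisms of the subnormal chain; for $p$-groups this is cleanest because a chief series makes every $G_i$ normal and the relations are those of a power-commutator presentation. (A direct algorithm achieving the same time bound also follows by hashing the polynomial-size ``interface signature'' of each bottom half and looking up those of the top candidates, but phrasing the result as a reduction to composition-series isomorphism is what lets one afterward plug in the $n^{O(p)}$-time algorithm for $p$-group composition-series isomorphism.)
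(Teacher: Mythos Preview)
Your approach is quite different from the paper's, and the paper's is far simpler. The paper does not use any meet-in-the-middle idea here at all: the factor $1/2$ in the exponent comes entirely from a direct counting of composition series. One fixes a single composition series $S$ of $G$, enumerates \emph{all} composition series $S'$ of $H$, and asks the oracle whether $S\cong S'$ for each. The point is that the number of composition series of $H$ is already at most $n^{(1/2)\log_p n + O(1)}$: building a maximal chain from the bottom, the $(k+1)$-st term is obtained by adjoining one coset representative, so there are at most $|H|/p^k$ choices at step $k$, and
\[
\prod_{k=0}^{\lfloor\log_p n\rfloor-1}\frac{n}{p^k}\;\le\;p^{\sum_{k=0}^{\lceil\log_p n\rceil}k}\;=\;n^{(1/2)\log_p n+O(1)}.
\]
That is the whole argument. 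No splitting at a middle level, no partial embeddings, no extension test.

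Your proposal, by contrast, enumerates ``bottom halves'' $\psi:M\hookrightarrow H$ and then tries to decide, per $\psi$, whether $\psi$ extends, using the oracle. You correctly flag the extension test as the crux, and it is a genuine gap: the oracle only answers ``is there \emph{some} isomorphism $S\to S'$?'', not ``is there one extending a prescribed $\psi$ on a subgroup?''. Your suggestion to climb one factor at a time does not help, since at each of the $(1/2)\log_p n$ upper levels you still face $\Theta(n)$ choices for the image of the lift-element, and nothing in the oracle interface lets you prune this search; you are back to enumerating $n^{(1/2)\log_p n}$ top segments per bottom half and hence $n^{\log_p n}$ in total. The ``package the interface data into composition-series-isomorphism instances'' step is asserted but not carried out, and I do not see a way to encode a fixed partial map into an ordinary composition-series-isomorphism query. (The hashing/collision idea you mention parenthetically is a different theorem---it is how the later $1/4$ improvement works---and in any case is not a reduction to composition-series isomorphism.) The clean fix is simply to drop the split: enumerate all composition series of $H$ directly, which the counting bound makes cheap enough.
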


This bound can be proved by counting the number of composition series using a simple argument.  We are grateful to Laci Babai for pointing this out as it simplifies our algorithm.

Our second step is reduce $p$-group composition-series isomorphism to testing isomorphism of graphs of degree $p + O(1)$.  We accomplish this by constructing a tree with a node for each coset for each of the intermediate subgroups in the composition series.  By adding certain gadgets that encode the multiplication table of the group, we show that composition series for two $p$-groups are isomorphic \ifft the graphs resulting from this construction are isomorphic.  By applying an $n^{O(d)}$ time algorithm~\cite{luks1982a,babai1983a,babai1983b} for testing isomorphism of graphs of degree at most $d$, we obtain an $n^{O(p)}$ time algorithm for $p$-group composition-series isomorphism.  Combining this result with our reduction from group isomorphism to composition-series isomorphism yields an $n^{(1 / 2) \log_p n + O(p)}$ time algorithm for testing isomorphism of $p$-groups.  Combining this algorithm with the generator-enumeration algorithm completes the proof of \thmref{p-group-iso}.


The canonical form of a class of objects is a function that maps each object to a unique representative of its isomorphism class.  Since canonical forms of graphs of degree at most $d$ can be computed in $n^{O(d)}$ time~\cite{luks1982a,babai1983a}\footnote{Luks showed that there is a faster $n^{O(d / \log d)}$ algorithm for testing isomorphism of graphs of degree at most $d$~\cite{babai1983b} but it does not improve any of our results.}, \thmref{p-group-iso} can be modified to perform \emph{$p$-group canonization} in the same complexity bound.  If $p \leq \alpha$ is small, we compute the canonical form of the graph that arises from each choice of composition series and choose the one that comes first lexicographically.  A canonical multiplication table for the $p$-group is then recovered from this canonical form.  When $p > \alpha$, we use a variant of the generator-enumeration algorithm that performs group canonization.

The framework of reducing group isomorphism to composition-series isomorphism and then to low-degree graph isomorphism was proposed by the second author~\cite{wagner2011a}; the results in this paper appeared in preliminary form in~\cite{rosenbaum2013a}.  Recent papers by the first author extend all of our results to solvable groups~\cite{rosenbaum2013d} using Hall's theory of Sylow bases~\cite{hall1938a} and introduce collision detection arguments that yield an $n^{(1 / 2) \log_p n + O(1)}$ algorithm for testing isomorphism of general groups~\cite{rosenbaum2013b}.  Combined with the canonization machinery just mentioned, these collision detection arguments to reduce the $1 / 2$ in exponent of \thmref{p-group-iso} to $1 / 4$~\cite{rosenbaum2013b}.  While the generator-enumeration algorithm and \thmref{p-group-iso} both require only polynomial space, both of these speedups come at the cost of requiring space comparable to the runtime.  

We start with group \inpuborpriv{theory}{and category theory} background in \secref{background}.  In \secref{comp-red}, we reduce group isomorphism to composition-series isomorphism.  In \secref{graph-red}, we present the reduction from $p$-group isomorphism to low-degree graph isomorphism.  In \secref{p-algorithms}, we derive our algorithms for $p$-group isomorphism.

\section{\inpuborpriv{Group-theory b}{B}ackground}
\label{sec:background}
\inpriv{\subsection{Group theory}}
In this \inpriv{sub}section, we review some basic group theory that will be used in the paper; we omit the proofs which may be found in group theory and algebra texts~\cite{hungerford1974a,rotman1995a,robinson1996a,lang2002a,holt2005a,rose2009a,artin2010a,wilson2010a,roman2011a}.  Readers familiar with group theory may to skip this section.

Let $G$ and $H$ be groups.  Throughout this paper, we assume that all groups are finite and let $n = \abs{G}$.  \inpriv{The \emph{rank} of a group $G$ (denoted $\rank(G)$) is the size of the smallest generating set for $G$.  }We let $\iso(G, H)$ denote the set of all isomorphisms from $G$ to $H$.  For a prime $p$, a \emph{$p$-group} is a group of order a power of $p$.  The \emph{conjugation} of $x$ by $g$ is written as $x^g = g x g^{-1}$; the bijections $\iota_g : G \ra G : x \mapsto x^g$ are the \emph{inner automorphisms} of $G$.  A \emph{normal} subgroup $N$ of $G$ (denoted $N \trile G$) is a subgroup that is closed under conjugation by elements of $G$.  The \emph{left coset} of an element $g \in G$ for a subgroup $H \leq G$ is the set $g H = \setb{g h}{h \in H}$.  Since we do not use right cosets in this paper, we will simply refer to left cosets as \emph{cosets}.  If $H \leq G$, then $G / H$ denotes the group of cosets of $H$ in $G$.

A \emph{subnormal series} $S$ of a group $G$ is a tower of subgroups $G_0 = 1 \tril \cdots \tril G_m = G$.  The groups $G_{i + 1} / G_i$ are called the \emph{factors} of $S$.  A \emph{composition series} of a group $G$ is a maximal subnormal series for $G$.  For a $p$-group, every composition factor is isomorphic to the cyclic group of order $p$.

\inpriv{\subsection{Category theory}
It will be convenient for us to use a small amount of terminology from category theory.  A \emph{category} is a class of objects together with a collection of \emph{morphisms} between pairs of objects such that there is an identity morphism and composition of morphisms is associative.  The class of morphisms between two objects $C$ and $C'$ in a category $\cC$ is denoted $\hom(C, C')$.  An example of a category is the class of all groups together with the homomorphisms between the groups.  An invertible morphism is called an \emph{isomorphism}.  A \emph{functor} $F : \cC \ra \cD$ is a mapping from the category $\cC$ to the category $\cD$ that sends each object $C$ in $\cC$ to an object $F(C)$ in $\cD$.  For each morphism $f : C \ra C'$, there is a morphism $F(f) : F(C) \ra F(C')$; moreover, $F$ is required to respect the identity morphism and composition.  The functor $F$ is \emph{full} if for each pair of objects $C$ and $C'$ in $\cC$, the map $F_{C, C'} : \hom(C, C') \ra \hom(F(C), F(C'))$ is surjective.  The functor $F$ is \emph{faithful} if each $F_{C, C'}$ is injective.  A \emph{natural isomorphism} $\nu : F \ra G$ between two functors $F : \cC \ra \cD$ and $G : \cC \ra \cD$ that maps each $C$ in $\cC$ to an isomorphism $\nu_C : F(C) \ra G(D)$ such that $\nu_{C'} F(f) = G(f) \nu_C$ for each morphism $f : C \ra C'$.  Two functors $F : \cC \ra \cD$ and $G : \cD \ra \cC$ are a \emph{category equivalence} if $FG$ and $GF$ are naturally isomorphic to the identity functors $I_{\cD}$ and $I_\cC$.}

\section{Reducing group isomorphism to composition-series isomorphism}
\label{sec:comp-red}
In this section, we prove an upper bound on the number of composition series for a group and provide a simple method for enumerating all such composition series.  In earlier versions of this work, a more complex construction was used to enumerate all composition series within a particular class and an upper bound was proved on the size of this class of composition series.  However, Laci Babai pointed out that the upper bound actually holds for the class of all composition series.  This allows a much simpler argument to be employed.

\begin{lemma}
  \label{lem:comp-bound}
  Let $G$ be a group.  Then the number of composition series for $G$ is at most $n^{(1 / 2) \log_p n + O(1)}$ where $p$ is the smallest prime dividing the order of the group.  Moreover, one can enumerate all composition series for $G$ in $n^{(1 / 2) \log_p n + O(1)}$ time.
\end{lemma}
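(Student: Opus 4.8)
The plan is to prove both halves of the statement — the counting bound and the enumeration — from one construction: the tree of all composition series of $G$, grown from the top down. First note that every composition series of $G$ has length at most $\log_p n$, since each composition factor is a nontrivial simple group and so has order divisible by a prime that divides $n$, hence at least $p$; as the factor orders multiply to $n$, a series of length $m$ forces $p^m\le n$. Now let $\mathcal T$ be the rooted tree whose root is $G$, whose depth-$j$ nodes are the chains $G=H_0\trir H_1\trir\dots\trir H_j$ with each $H_{i-1}/H_i$ simple, and in which the children of a node extend its chain by one further step. The leaves of $\mathcal T$ are exactly the composition series of $G$, each occurring once, and because $[G:H_j]\ge p^j$ we have $\abs{H_j}\le n/p^j$, so all chains end by depth $\log_p n$. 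Thus it suffices to bound the branching at a node, that is, the number of maximal normal subgroups (normal subgroups with simple quotient) of an arbitrary finite group $Q$.

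The crucial point I would establish is that $Q$ has fewer than $2\abs Q$ maximal normal subgroups. Those with abelian quotient have prime index $q$, contain $[Q,Q]$, and hence are preimages of index-$q$ subgroups of the abelianisation $Q/[Q,Q]$; the number of these is $\frac{q^{r_q}-1}{q-1}<q^{r_q}$, where $q^{r_q}$ divides the $q$-part of $\abs Q$, and summing over primes $q\mid\abs Q$ gives fewer than $\abs Q$ in all. A maximal normal subgroup with nonabelian simple quotient contains the intersection $R$ of all such subgroups, and $Q/R$ then embeds as a subdirect product of nonabelian simple groups; by the standard structure of such subdirect products $Q/R$ is itself a direct product of nonabelian simple groups, which has precisely as many maximal normal subgroups as its composition length, at most $\log_2\abs Q$. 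So $Q$ has at most $\abs Q+\log_2\abs Q<2\abs Q$ maximal normal subgroups, a depth-$j$ node of $\mathcal T$ has fewer than $2n/p^j$ children, and the number of leaves is at most $\log_p n\cdot\prod_{j=0}^{\log_p n-1}(2n/p^j)$. The dominant factor $\prod_j(n/p^j)$ equals $n^{(\log_p n+1)/2}$, and absorbing $2^{\log_p n}\le n$ and the factor $\log_p n$ gives the bound $n^{(1/2)\log_p n+O(1)}$.

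For the enumeration one performs a depth-first traversal of $\mathcal T$. The number of nodes is at most $\log_p n$ times the number of leaves, so $n^{(1/2)\log_p n+O(1)}$, and at each node the maximal normal subgroups of the current group $H$ can be listed in $\operatorname{poly}(\abs H)=\operatorname{poly}(n)$ time: the abelian-quotient ones are read off from the maximal subgroups of the finite abelian group $H/[H,H]$, and the at most $\log_2\abs H$ nonabelian-quotient ones are recovered from a composition (or chief) series of $H$, all computable in polynomial time from the multiplication table. Hence the entire enumeration runs in $n^{(1/2)\log_p n+O(1)}$ time, which also bounds the number of composition series.

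The step I expect to be the real obstacle is the count of maximal normal subgroups — in particular, forcing the nonabelian-quotient contribution to be $O(\abs Q)$ rather than, say, $\operatorname{polylog}(\abs Q)\cdot\abs Q$, which would already spoil the $O(1)$ in the exponent. The subdirect-product argument is what delivers this; it is the only place where any structure theory of simple groups enters, and it is vacuous when $p$ is odd, as $G$ then has no nonabelian composition factors.
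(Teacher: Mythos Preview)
Your proof is correct and reaches the same bound, but by a genuinely different route from the paper. The paper works \emph{bottom-up}: it enumerates chains $1=G_0<G_1<\cdots$ in which each $G_{k+1}=\langle G_k,g\rangle$ for some $g\in G$, noting that $\langle G_k,g\rangle$ depends only on the coset $gG_k$, so there are at most $[G:G_k]\le n/p^{k}$ choices at step $k$. This already gives the product $\prod_k n/p^{k}$ using nothing beyond Lagrange's theorem. The paper then observes that every maximal chain of subgroups arises this way (if $G_k$ is maximal in $G_{k+1}$ then any $g\in G_{k+1}\setminus G_k$ satisfies $\langle G_k,g\rangle=G_{k+1}$) and that each maximal chain contains at most one composition series as a subchain. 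You instead go \emph{top-down}, bounding the branching by the number of maximal normal subgroups of the current group; this costs you the subdirect-product lemma for the nonabelian-quotient case but buys you a tree whose leaves are precisely the composition series, with no detour through arbitrary maximal chains. The paper's argument is the more elementary of the two; yours is the more direct.

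One small quibble on the enumeration half: a single chief series of $H$ exhibits only one maximal normal subgroup (its penultimate term), not all of them---for instance in $A_5\times A_5$ the chief series through $A_5\times 1$ misses $1\times A_5$. To list all the nonabelian-quotient maximal normal subgroups you need to actually compute the intersection $R$ and then decompose $H/R$ into its simple direct factors (e.g.\ by finding the minimal normal subgroups of $H/R$), or do something equivalent. This is indeed polynomial time from the multiplication table, but it is a step, not a triviality, and your sketch elides it. The paper's bottom-up enumeration sidesteps the issue entirely, since listing coset representatives is immediate.
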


\begin{proof}
  We show that one can enumerate a class of chains that contains all maximal chains of subgroups in $n^{\log_p n + O(1)}$ time.  Since every maximal chain of subgroups contains at most one composition series as a subchain, this suffices to prove the result.

  We start by choosing the first nontrivial subgroup in the series.  Each of these is generated by a single element so there are at most $n$ choices.  If we have a chain $G_0 = 1 < \cdots < G_k$ of subgroups of $G$, then the next subgroup in the chain can be chosen in at most $\abs{G / G_k}$ ways since different representatives of the same coset generate the same subgroup.  Since each $\abs{G_{i + 1}} \geq p \abs{G_i}$, we see that the number of choices $\abs{G / G_k}$ for $G_{k + 1}$ is at most $n / p^k$.  Therefore, the total number of choices required to construct a chain of subgroups in this manner is at most

  \begin{align*}
    \prod_{k = 0}^{\lfloor \log_p n \rfloor - 1} (n / p^k) & \leq p^{\sum_{k = 0}^{\lceil \log_p n \rceil} k} \\
                                                        {} & = p^{(1 / 2) \log_p^2 n + O(\log_p n)} \\
                                                        {} & \leq n^{(1 / 2) \log_p n + O(1)}
  \end{align*}

Since the set of subgroup chains enumerated by this process includes all maximal chains of subgroups, the result follows. 
\end{proof}

We say that two composition series $G_0 = 1 \tril \cdots \tril G_m = G$ and $H_0 = 1 \tril \cdots \tril H_{m'} = H$ are isomorphic if there exists an isomorphism $\phi : G \ra H$ such that each $\phi[G_i] = H_i$.  It is now very easy to obtain the Turing reduction from group isomorphism to composition series isomorphism.

\groupredcomp*

\begin{proof}
  Let $G$ and $H$ be groups.  Fix a composition series $S$ for $G$.  If $G \cong H$, then some composition series $S'$ for $H$ will be isomorphic to $S$.  Thus, testing isomorphism of $G$ and $H$ reduces to testing if $S$ is isomorphic to some composition series for $S'$.  The result is then immediate from \lemref{comp-bound}.
\end{proof}

The reduction also applies to reducing group canonization to composition series canonization.





\begin{theorem}
  \label{thm:group-red-comp-can}
  Computing the canonical form of a group is $n^{(1 / 2) \log_p n + O(1)}$ time Turing reducible to computing canonical forms of composition series for the group where $p$ is the smallest prime dividing the order of the group.
\end{theorem}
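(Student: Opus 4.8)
The plan is to imitate the proof of \thmref{group-red-comp}, replacing ``test isomorphism against one fixed series'' by ``take the minimum of the canonical forms over all composition series.'' Recall that a canonical form for composition series is a map $\can$ sending each pair $(G,S)$, consisting of a group $G$ together with a composition series $S$ of $G$, to an encoding (for concreteness, a multiplication table on $[n]$ with the members of the series flagged as distinguished subsets) that is isomorphic to $(G,S)$ and depends only on the isomorphism type of $(G,S)$; this is precisely what the oracle supplies. Given $G$, I would first use \lemref{comp-bound} to enumerate all composition series $S_1,\dots,S_t$ of $G$ in $n^{(1/2)\log_p n + O(1)}$ time, with $t \le n^{(1/2)\log_p n + O(1)}$. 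I would then call the oracle on each pair $(G,S_i)$ and output the $\can(G,S_i)$ that is least under a fixed total order on encodings; if a group multiplication table is desired, output its group component.

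The correctness hinges on one observation: an isomorphism $\phi \colon G \to H$ induces a bijection $S \mapsto \phi[S]$ between the composition series of $G$ and those of $H$, and $(G,S) \cong (H,\phi[S])$ as groups-with-composition-series, whence $\can(G,S) = \can(H,\phi[S])$. Therefore the set $\{\can(G,S_i)\}_i$ depends only on the isomorphism type of $G$, so its minimum, and in particular the group component of that minimum, is identical for $G$ and for every $H \cong G$. Conversely, if the outputs agree for $G$ and $H$, their group components are isomorphic groups, so $G \cong H$; hence the construction is a genuine canonical form for groups.

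The time bound is immediate: the enumeration costs $n^{(1/2)\log_p n + O(1)}$, there are at most that many oracle calls, and selecting the minimum among the $t$ returned encodings is polynomial in $t$ and the encoding lengths. So the reduction runs in $n^{(1/2)\log_p n + O(1)}$ time given the oracle.

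No step here is genuinely difficult; the only part that requires care is fixing the right notion of ``composition series'' and its isomorphisms so that the induced bijection above really does carry isomorphic pairs to identical canonical forms, and checking that the oracle's output pins down the underlying group up to isomorphism, so that projecting onto the group component does not destroy canonicity.
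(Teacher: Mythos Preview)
Your proposal is correct and matches the paper's proof essentially step for step: enumerate all composition series via \lemref{comp-bound}, call the oracle on each, and output the lexicographically least result (the paper extracts the multiplication table first and then minimizes, while you minimize over full encodings and then project, but both work for the same reason). Your correctness argument, using the bijection on composition series induced by an isomorphism $\phi\colon G\to H$, is slightly more explicit than the paper's one-line justification but amounts to the same observation.
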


\begin{proof}
  Let $G$ be a group.  We use \lemref{comp-bound} to enumerate all of the at most $n^{(1 / 2) \log_p n + O(1)}$ composition series $S$ for $G$ and compute the canonical form of each one.  From each such canonical form, we extract the multiplication table and define $\can_{\grp}(G)$ to be the lexicographically least matrix among all such multiplication tables.  Since two groups $G$ and $H$ have isomorphic composition series $S$ and $S'$ \ifft $G \cong H$, it follows that $\can_{\grp}$ is a canonical form.
\end{proof}

\section{Composition-series isomorphism and canonization}
\label{sec:graph-red}
In this section, we reduce composition-series isomorphism to low-degree graph isomorphism.  We also extend the reduction to perform composition-series canonization instead of isomorphism testing.

First, we review some basic concepts for graphs.  A \emph{colored graph} is a graph that associates each vertex with a given color.  Two colored graphs are isomorphic if there is a bijection between their vertex sets that respects the edges and maps each node to a node of the same color.  We shall make use of the following result.

\begin{theorem}[Babai and Luks~\cite{babai1983a,babai1983b}]
  \label{thm:const-deg-can}
  Canonization of colored graphs of degree at most $d$ is in $n^{O(d)}$ time.
\end{theorem}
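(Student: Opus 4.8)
The plan is to follow the group-theoretic approach to bounded-degree graph isomorphism due to Luks, together with its extension to canonization by Babai and Luks~\cite{luks1982a,babai1983a,babai1983b}; here I only sketch the structure of that argument. The heart of it is a reduction to a purely group-theoretic problem: given a permutation group $\Gamma \leq \sym(\Omega)$ by a generating set and a coloring $x : \Omega \ra C$, compute a canonical form of $x$ under the action of $\Gamma$ together with a witnessing group element. One shows it suffices to solve this when $\Gamma$ lies in Luks's class $\Gamma_d$ of permutation groups all of whose composition factors are isomorphic to subgroups of the symmetric group $S_d$.

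First I would reduce the graph problem to this string problem. We may assume the colored graph $X$ is connected: otherwise canonize each connected component recursively and combine, ordering the components by their canonical forms (this is where most of the colored-graph bookkeeping lives, but it is routine). For connected $X$ of degree at most $d$, fix an edge $e$ and build the rooted tree $T_e$ whose level-$k$ vertices are the non-backtracking walks of length $k$ starting along $e$, truncated at depth $\abs{V(X)}$; isomorphisms of $X$ carrying one fixed edge to another correspond to isomorphisms of the corresponding trees, decorated with the pattern recording which walks terminate at the same vertex of $X$. The key structural fact is that the automorphism group of $X$ rooted at $e$, in its action on the levels of $T_e$, lies in $\Gamma_{d-1}$: each vertex of $T_e$ has at most $d - 1$ children, so this group embeds in an iterated wreath product of copies of $S_{d-1}$, whose composition factors are subgroups of $S_{d-1}$.

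Second I would solve the $\Gamma_d$-canonization problem by divide and conquer along two axes. If $\Gamma$ is intransitive, split $\Omega$ into its orbits, canonize the restriction of $x$ to each orbit under the induced group, and reassemble; the $\Gamma_d$ structure controls how the orbit-wise answers are glued so that the result is still canonical. If $\Gamma$ is transitive, pass to a minimal block system: the image of $\Gamma$ on blocks is a primitive group in $\Gamma_d$, so the block stabilizer has bounded index and one recurses over its (intransitive) action on a system of coset representatives, taking the lexicographically least labeled coloring obtained. Because the primitive quotients encountered are suitably small, the recursion depth is $O(\log n)$ along each axis and each transitive step contributes an overhead of $d^{O(d)}$, so the two contributions multiply to an $n^{O(d)}$ bound. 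For canonization rather than mere isomorphism testing one threads partial canonical relabelings through both the coset-minimization step and the orbit-merge step, which is precisely the refinement Babai and Luks carry out.

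The main obstacle is the bookkeeping needed to make canonization work: ensuring the relabeling chosen on a sub-instance is compatible with the one demanded by its parent, and that the gluing of orbit-wise canonical forms is itself independent of the arbitrary choices made along the way, together with pinning down the constant in the $\Gamma_{d-1}$ membership so that the per-level overhead is genuinely $d^{O(d)}$. The reduction from an arbitrary colored graph --- disconnected, with color classes of wildly different sizes, and possibly carrying parallel edges after gadget constructions --- to the connected, rooted-at-an-edge case also needs care, but it involves no new ideas.
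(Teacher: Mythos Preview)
The paper does not give a proof of this theorem: it is stated as a known result of Babai and Luks~\cite{babai1983a,babai1983b} and used as a black box throughout \secref{graph-red} and \secref{p-algorithms}. Your sketch is a faithful outline of the method in those cited works, so there is no ``paper's own proof'' to compare against.

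One imprecision in your sketch is worth flagging. In the transitive step you write that ``the block stabilizer has bounded index'' and that ``each transitive step contributes an overhead of $d^{O(d)}$.'' The index of the kernel of the action on a minimal block system equals the order of the primitive image on the $m$ blocks, and for primitive groups in $\Gamma_d$ this order is not bounded by a function of $d$ alone but by $m^{c(d)}$ for some $c(d) = O(d)$ (this is the content of the P\'alfy / Babai--Cameron--P\'alfy bound that drives Luks's analysis). The recursion $T(n) \le m^{c(d)} \cdot m \cdot T(n/m)$ still solves to $n^{O(d)}$, but the per-step overhead is polynomial in the current domain size with exponent depending on $d$, not a constant depending only on $d$. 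Your ``recursion depth $O(\log n)$ times $d^{O(d)}$ per step'' accounting would only give $n^{O(d \log d)}$ or worse if taken literally; the correct analysis is a multiplicative telescoping of the $m^{c(d)}$ factors down the block tower.
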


\subsection{Isomorphism testing}
To test if two composition series are isomorphic, we construct a tree by starting with the whole group $G$ and decomposing it into its cosets $G / G_{m - 1}$; we then further decompose each coset in $G / G_{m - 1}$ into the cosets $G / G_{m - 2}$ that it contains.  This process is repeated until we reach the trivial group $G_0 = 1$.  We make this precise with the following definition.

\begin{definition}
  \label{defn:TS}
  Let $G$ be a group and consider the composition series $S$ given by the subgroups $G_0 = 1 \tril \cdots \tril G_m = G$.  Then $T(S)$ is defined to be the root tree whose nodes are $\bigcup_i G / G_i$.  The root node is $G$.  The leaf nodes are $\{x\} \in G / 1$ which we identify with $x \in G$.  For each node $x G_{i + 1} \in G / G_{i + 1}$, there is an edge to each $y G_i$ such that $y G_i \subseteq x G_{i + 1}$.
\end{definition}

We now use this tree to define a graph that encodes the multiplication table of $G$.  The idea is to attach a multiplication gadget to the nodes $x, y, z \in G$ for each entry $x y = z$ in the multiplication table.  If we did this naively, each node $x \in G$ would have degree $\Omega(n)$.  We address this problem by defining a variant of the rooted product~\cite{godsil1978a} which we call a leaf product.  Let $T_1$ and $T_2$ be rooted trees.  The \emph{leaf product} of $T_1$ and $T_2$ (denoted $T_1 \leafprod T_2$) is the tree obtained by creating a copy of $T_2$ for each leaf node of $T_1$ and identifying the root of each copy with one of the leaf nodes.  We denote by $L(T)$ the set of leaves of the tree $T$.

\begin{definition}
  \label{defn:leaf-prod}
  Let $T_1$ and $T_2$ be trees rooted at $r_1$ and $r_2$.  Then the leaf product $T_1 \leafprod T_2$ is the tree rooted at $r_1$ with vertex set

  \begin{equation*}
    V(T_1) \cup \setb{(x, y)}{x \in L(T_1) \text{ and } y \in V(T_2) \setminus \{r_2\}}
  \end{equation*}

  The set of edges is

  \begin{align*}
    E(T_1) & \cup \setb{(x, (x, y))}{x \in L(T_1) \text{ and } (r_2, y) \in E(T_2)} \\
        {} & \cup \setb{((x, y), (x, z))}{x \in L(T_1) \text{ and } (y, z) \in E(T_2) \text{ where } y, z \not= r_2}
  \end{align*}
\end{definition}





Leaf products are non-commutative but are associative if we identify the tuples $(x, (y, z))$, $((x, y), z)$ with $(x, y, z)$ in the vertex set.  (This is same sense in which cross products are associative.)  We shall make this identification from now on as it simplifies our notation.

Since we will need to consider isomorphisms of leaf products of trees, it is also useful to define leaf products of tree isomorphisms.

\begin{definition}
  \label{defn:leaf-prod-iso}
  For each $1 \leq i \leq k$, let $T_i$ and $T_i'$ be trees rooted at $r_i$ and $r_i'$ and let $\phi_i : T_i \ra T_i'$ be an isomorphism.  Then the leaf product $\bigleafprod_{i = 1}^k \phi_i : \bigleafprod_{i = 1}^k T_i \ra \bigleafprod_{i = 1}^k T_i'$ sends each $(x_1, \ldots, x_j)$ to $(\phi_1(x_1), \ldots, \phi_j(x_j))$ where each $x_i \in L(T_i)$ for $i < j$, $x_j \in V(T_j) \setminus \{r_j\}$ and $j \leq k$.
\end{definition}

For a bijection $\phi$ between the leaves of two trees, we shall use the notation $\hat \phi$ to denote the unique isomorphism between the trees to which $\phi$ extends (when such an isomorphism exists).  The following extension of leaf products is convenient.  For each $1 \leq i \leq k$, let $\phi_i$ be a bijection from the leaves of $T_i$ to the leaves of $T_i'$ that extends uniquely to an isomorphism $\hat \phi : T_i \ra T_i'$.  Then we define $\bigleafprod_{i = 1}^k \phi_i = \bigleafprod_{i = 1}^k \hat \phi_i$.

It is easy to see that $\bigleafprod_{i = 1}^k \phi_i$ is an isomorphism from $\bigleafprod_{i = 1}^k T_i$ to $\bigleafprod_{i = 1}^k T_i'$.

\begin{proposition}
  \label{prop:leaf-prod-well}
  For each $1 \leq i \leq k$, let $T_i$ and $T_i'$ be rooted trees and let $\phi_i$ be a bijection between the leaves of $T_i$ and $T_i'$ such that $\phi_i$ extends uniquely to an isomorphism from $T_i$ to $T_i'$.  Then $\bigleafprod_{i = 1}^k \phi_i : \bigleafprod_{i = 1}^k T_i \ra \bigleafprod_{i = 1}^k T_i'$ is a well-defined isomorphism.
\end{proposition}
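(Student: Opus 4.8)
The plan is to induct on $k$ and, using associativity of the leaf product, reduce to the case of two trees, which I then verify directly against \defref{leaf-prod}.

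First I would put the hypothesis into usable form. Since each $\phi_i$ extends \emph{uniquely} to an isomorphism $\hat\phi_i : T_i \ra T_i'$, the expression $\bigleafprod_{i=1}^k \phi_i = \bigleafprod_{i=1}^k \hat\phi_i$ is unambiguous, so it suffices to show this coordinatewise map is a bijection on vertex sets that preserves adjacency in both directions. I would also record the elementary fact that a rooted-tree isomorphism carries the root to the root and children to children, hence restricts to a bijection $L(T_i) \ra L(T_i')$ and to a bijection $V(T_i) \setminus \{r_i\} \ra V(T_i') \setminus \{r_i'\}$; this is exactly the compatibility that makes the coordinatewise description in \defref{leaf-prod-iso} meaningful.

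For the induction, the base case $k = 1$ is the hypothesis. For the step I would use that $\bigleafprod_{i=1}^k T_i = \big(\bigleafprod_{i=1}^{k-1} T_i\big) \leafprod T_k$ under the tuple identification (and likewise for the primed trees), and that under this identification $\bigleafprod_{i=1}^k \hat\phi_i$ is the coordinatewise leaf product $\Psi \leafprod \hat\phi_k$ of $\Psi := \bigleafprod_{i=1}^{k-1}\hat\phi_i$ (a rooted-tree isomorphism by the inductive hypothesis) with $\hat\phi_k$. So everything reduces to the binary claim: if $\psi : T \ra T'$ and $\chi : U \ra U'$ are rooted-tree isomorphisms, then the coordinatewise map $\psi \leafprod \chi : T \leafprod U \ra T' \leafprod U'$ is an isomorphism. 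Here $V(T \leafprod U) = V(T) \cup \setb{(x,y)}{x \in L(T),\; y \in V(U) \setminus \{r_U\}}$, and $\psi \leafprod \chi$ acts as $\psi$ on $V(T)$ and as $(x,y) \mapsto (\psi(x), \chi(y))$ otherwise; by the restriction property above this is a bijection with inverse $\psi^{-1} \leafprod \chi^{-1}$. For edges, \defref{leaf-prod} lists three families: the base edges $E(T)$, the junction edges $\{x, (x,y)\}$ with $x \in L(T)$ and $(r_U, y) \in E(U)$, and the fiber edges $\{(x,y),(x,z)\}$ with $(y,z) \in E(U)$ and $y,z \neq r_U$. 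I would check that $\psi \leafprod \chi$ sends each family into the corresponding family for $T' \leafprod U'$: base edges because $\psi$ is an isomorphism of $T$; fiber edges because $\chi$ is an isomorphism of $U$ carrying $V(U)\setminus\{r_U\}$ into $V(U')\setminus\{r_{U'}\}$; and junction edges because $\psi$ maps leaves to leaves while $\chi$ fixes the root, so $(r_U, y) \in E(U)$ iff $(r_{U'}, \chi(y)) \in E(U')$. Running the same argument for $\psi^{-1} \leafprod \chi^{-1}$ gives the reverse inclusions, so adjacency is preserved both ways and $\psi \leafprod \chi$ is an isomorphism; this closes the induction.

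I expect the only real work to be bookkeeping rather than insight: one must confirm that re-associating nested tuples ($(x, (y,z)) \leftrightarrow ((x,y), z) \leftrightarrow (x,y,z)$) is compatible with the coordinatewise action, which is what legitimizes the binary reduction, and one must handle the junction edges carefully, since that is the only place where the leaf-versus-root distinction is used. There is no conceptual obstacle beyond this case analysis.
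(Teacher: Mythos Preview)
Your argument is correct. The paper itself does not supply a proof of this proposition: it simply states beforehand that ``it is easy to see that $\bigleafprod_{i=1}^k \phi_i$ is an isomorphism'' and then records the proposition without further justification. Your write-up therefore fills in exactly the routine verification the authors leave to the reader, and your choice to reduce by associativity to the binary case and then check the three edge families of \defref{leaf-prod} is the natural way to do this bookkeeping. One could alternatively work directly with the $k$-fold description in \defref{leaf-prod-iso} and avoid the inductive wrapper, but the content is the same.
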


As we mentioned earlier, simply attaching multiplication gadgets to the leaves of the tree $T(S)$ would result in a tree of large degree.  We resolve this problem by considering the tree $T(S) \leafprod T(S)$ instead.  We show how to construct multiplication gadgets so that each of the $n^2$ leaf nodes is involved in only a constant number of edges.  This causes the resulting graph to have degree $p + O(1)$ when $G$ is a $p$-group.  The details of this construction are described in the following definition.

\begin{definition}
  \label{defn:X-S}
  Let $G$ be a group and let $S$ be the composition series $G_0 = 1 \tril \cdots \tril G_m = G$.  Let $M$ be the tree with a root connected to three nodes $\la$, $\ra$ and $=$ with colors ``left'', ``right'' and ``equals'' respectively.  To construct $X(S)$, we start with the tree $T(S) \leafprod T(S) \leafprod M$ and connect multiplication gadgets to the leaf nodes.  For each $x, y \in G$, we create the path $((x, y, \la), (y, x, \ra), (x y, y, =))$.  The nodes other than the leaf nodes in $X(S)$ are colored ``internal.''
\end{definition}

The graph $X(S)$ is a \emph{cone graph}; that is, a rooted tree with edges between nodes at the same level.  We call the edges that form the tree in a cone graph \emph{tree edges} and the edges between nodes at the same level \emph{cross edges}.

Our next goal is to show that two composition series $S$ and $S'$ are isomorphic \ifft $X(S)$ and $X(S')$ are isomorphic.  Let $\comp$ be the \classorcat of composition series for finite groups \inpuborpriv{and}{and isomorphisms between them;} let $\comptree$ be the \classorcat of graphs that are isomorphic to a graph $X(S)$ for some composition series $S$\inpriv{ and isomorphisms between such graphs}.  For each pair of composition series $S$ and $S'$ and each isomorphism $\phi : S \ra S'$, we overload the symbol $X$ from \defref{X-S} by defining $X(\phi) : X(S) \ra X(S')$ to be $\phi \leafprod \phi \leafprod \id_M$\inpriv{ (thus obtaining a functor)}.

We seek to show that \inpuborpriv{for two composition series $S$ and $S'$, the map $X_{S, S'} : \iso(S, S') \ra \iso(X(S), X(S'))$ given by $\phi \mapsto X(\phi)$ is surjective and}{$X$ is a full functor that} can be evaluated in polynomial time.  We note that in particular, this result shows that $X$ can be used to reduce composition series isomorphism to testing isomorphism of the resulting graphs.  We start by showing that any isomorphism between $S$ and $S'$ maps to an isomorphism between $X(S)$ and $X(S')$.

\begin{lemma}
  \label{lem:X-functor}
  \inpuborpriv{For each pair of composition series $S$ and $S'$, $X_{S, S'} : \iso(S, S') \ra \iso(X(S), X(S'))$ is well-defined.}{$X : \comp \ra \comptree$ is a functor.}
\end{lemma}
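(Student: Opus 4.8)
The plan is to verify the two functor axioms: that $X$ sends each isomorphism $\phi : S \ra S'$ to a genuine isomorphism $X(\phi) : X(S) \ra X(S')$, and that $X$ respects identities and composition. The second part is essentially formal: since $X(\phi) = \phi \leafprod \phi \leafprod \id_M$, the fact that $X(\id_S) = \id_{X(S)}$ and $X(\psi \phi) = X(\psi) X(\phi)$ will follow immediately from the corresponding properties of leaf products of tree isomorphisms (Definition \ref{defn:leaf-prod-iso}), once we observe that $\bigleafprod$ is functorial in each argument — this is a routine check that composition of tuples coordinatewise commutes with the coordinatewise action. So the real content is the first part: well-definedness, i.e.\ that $X(\phi)$ is actually a graph isomorphism of the cone graphs $X(S)$ and $X(S')$, color-preserving and edge-preserving in both directions.

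First I would check that an isomorphism $\phi : S \ra S'$ of composition series induces an isomorphism of the trees $T(S) \ra T(S')$. By definition $\phi : G \ra H$ is a group isomorphism with $\phi[G_i] = H_i$ for all $i$; hence $\phi$ carries each coset $x G_i$ to the coset $\phi(x) H_i$, and preserves the containment relation $y G_i \subseteq x G_{i+1}$ that defines the tree edges of $T(S)$. So $\phi$ restricted to the leaves (which we identify with the elements of $G$) extends uniquely to a tree isomorphism $T(S) \ra T(S')$, and Proposition \ref{prop:leaf-prod-well} then tells us that $\phi \leafprod \phi \leafprod \id_M$ is a well-defined isomorphism of the trees $T(S) \leafprod T(S) \leafprod M \ra T(S') \leafprod T(S') \leafprod M$. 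This handles all the tree edges of the cone graphs, and the coloring: interior nodes of the leaf-product tree go to interior nodes, and the three distinguished nodes $\la, \ra, =$ of $M$ are fixed by $\id_M$, so the colors ``left'', ``right'', ``equals'' are preserved; the ``internal'' color is preserved since tree isomorphisms send non-leaf nodes to non-leaf nodes.

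The one step requiring genuine (if short) verification is that the cross edges — the multiplication gadgets — are preserved. A cross edge of $X(S)$ is one of the path edges in $((x,y,\la),\,(y,x,\ra),\,(xy,y,=))$ for $x, y \in G$. Under $X(\phi)$ the leaf node $(x, y, \la)$ maps to $(\phi(x), \phi(y), \la)$, and similarly $(y, x, \ra) \mapsto (\phi(y), \phi(x), \ra)$ and $(xy, y, =) \mapsto (\phi(xy), \phi(y), =)$. Since $\phi$ is a group homomorphism, $\phi(xy) = \phi(x)\phi(y)$, so the image triple is exactly $((\phi(x),\phi(y),\la),\,(\phi(y),\phi(x),\ra),\,(\phi(x)\phi(y),\phi(y),=))$, which is precisely the multiplication gadget attached in $X(S')$ for the pair $\phi(x), \phi(y) \in H$. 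As $x, y$ range over $G$, the pairs $\phi(x), \phi(y)$ range over all of $H \times H$, so $X(\phi)$ maps the set of cross edges of $X(S)$ bijectively onto the set of cross edges of $X(S')$. Applying the same argument to $\phi^{-1}$ (also an isomorphism of composition series) shows $X(\phi)^{-1} = X(\phi^{-1})$ is likewise edge-preserving, so $X(\phi)$ is a bona fide graph isomorphism. I expect the only mild subtlety to be bookkeeping with the associativity/identification convention for leaf-product tuples $(x, y, \cdot)$, but the homomorphism property $\phi(xy) = \phi(x)\phi(y)$ is the crux and everything else is structural.
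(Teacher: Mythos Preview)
Your proposal is correct and follows essentially the same approach as the paper: extend $\phi$ to a tree isomorphism $T(S)\to T(S')$ via $\phi[G_i]=H_i$, invoke \propref{leaf-prod-well} for the leaf product, and then check that the multiplication-gadget cross edges are preserved using $\phi(xy)=\phi(x)\phi(y)$. The only cosmetic differences are that you explicitly verify color preservation and argue bijectivity on cross edges via the bijectivity of $\phi\times\phi$ (and $\phi^{-1}$), whereas the paper simply notes that $X(S)$ and $X(S')$ have equal numbers of cross edges.
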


\begin{proof}
  Let\inpriv{ $S$ and $S'$ be composition series} $G_0 = 1 \tril \cdots \tril G_m = G$ and $H_0 = 1 \tril \cdots \tril H_m = H$\inpub{ be the subgroup chains for the composition series $S$ and $S'$} and let $\phi : S \ra S'$ be an isomorphism.  We can view $\phi$ as a bijection from the leaves of $T(S)$ to the leaves of $T(S')$.  Since each $\phi[G_i] = H_i$, we see that $\phi$ extends to a unique isomorphism $\hat \phi : T(S) \ra T(S')$.  By \propref{leaf-prod-well}, $X(\phi) : T(S) \leafprod T(S) \leafprod M \ra T(S') \leafprod T(S') \leafprod M$ is a tree isomorphism.  Then by \defref{X-S}, we just need to show that $X(\phi)$ respects the cross edges representing the multiplication gadgets.

  Let $x, y \in G$.  Then $X(S)$ contains the path $((x, y, \la), (y, x, \ra), (x y, y, =))$.  In $H$, $\phi(x) \phi(y) = \phi(x y)$ so $X(S')$ contains the path $((\phi(x), \phi(y), \la), (\phi(y), \phi(x), \ra), (\phi(x y), \phi(y), =))$.  By definition, we see that $X(\phi)$ maps the path $((x, y, \la), (y, x, \ra), (x y, y, =))$ in $X(S)$ to the path $((\phi(x), \phi(y), \la), (\phi(y), \phi(x), \ra), (\phi(x y), \phi(y), =))$ in $X(S')$.  Since $X(S)$ and $X(S')$ have equal numbers of cross edges, it follows that $X(\phi) : X(S) \ra X(S')$ is an isomorphism.\inpriv{

  Finally, if $S''$ is a composition series and $\psi$ is an isomorphism from $S'$ to $S''$ then $X(\psi \phi) = X(\psi) X(\phi)$ and $X(\id_S) = \id_{X(S)}$.  Thus, $X$ is a functor.}
\end{proof}

Next, we show that \inpuborpriv{each $X_{S, S'}$ is surjective}{$X$ is a full functor}.  This is more difficult and is accomplished by the next two results.  We first show that every isomorphism from $X(S)$ to $X(S')$ can be expressed as a leaf product.


\begin{lemma}
  \label{lem:iso-decomp}
  Let $S$ and $S'$ be composition series for the groups $G$ and $H$ and let $\theta : X(S) \ra X(S')$ be an isomorphism.  Define $\phi : G \ra H$ to be $\restr{\theta}{G}$.  Then

  \begin{enumerate}
  \item $\theta = \phi \leafprod \phi \leafprod \id_M$ and
  \item $\phi : S \ra S'$ is an isomorphism.
  \end{enumerate}
\end{lemma}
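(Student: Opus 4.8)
The plan is to peel the leaf-product structure off $\theta$ one layer at a time, working from the root down, using the fact that $X(S)$ and $X(S')$ are cone graphs with a highly rigid tree skeleton. First I would argue that $\theta$ must respect the level structure of the cone graph: the root is the unique node of its color at the top, and the colors ``internal'', ``left'', ``right'', ``equals'' together with the tree-edge distances pin down which level each node lies on, so $\theta$ sends the $i$-th level of $X(S)$ to the $i$-th level of $X(S')$. In particular $\theta$ maps the tree part $T(S)\leafprod T(S)\leafprod M$ isomorphically onto $T(S')\leafprod T(S')\leafprod M$ and carries cross edges to cross edges. Restricting to the copy of $M$ hanging under a fixed leaf pair, the three colored children force $\theta$ to act as $\id_M$ on each $M$-factor; restricting $\theta$ to the first $T(S)$-factor (the path from the root down to the $G/G_m,\dots,G/G_0$ levels) gives a tree isomorphism $T(S)\to T(S')$ whose action on leaves is exactly $\phi=\restr{\theta}{G}$, and similarly for the second factor. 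This is essentially a uniqueness statement: an isomorphism of leaf products of rooted trees that preserves the canonical level/color data is determined by, and decomposes as, the leaf products of its restrictions to the factors — which is the content I would extract (possibly as a small lemma strengthening \propref{leaf-prod-well}). That yields part~1, $\theta=\phi\leafprod\phi\leafprod\id_M$, once I check the two $T(S)$-factor restrictions agree (they must, since the second-factor copies are indexed by leaves of the first and $\theta$ is a single map).

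For part~2, I would first show $\phi$ is a group isomorphism $G\to H$ by reading off the multiplication gadgets. For $x,y\in G$ the path $((x,y,\la),(y,x,\ra),(xy,y,=))$ in $X(S)$ must go to a cross-edge path in $X(S')$; since $\theta=\phi\leafprod\phi\leafprod\id_M$ the endpoints are $(\phi(x),\phi(y),\la)$ and $(\phi(xy),\phi(y),=)$, and these are joined by a gadget path in $X(S')$ only when the middle node is $(\phi(y),\phi(x),\ra)$ and the $=$-node is $(\phi(x)\phi(y),\phi(y),=)$; comparing with what $\theta$ actually produces forces $\phi(xy)=\phi(x)\phi(y)$. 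Injectivity/surjectivity of $\phi$ are immediate since $\theta$ is a bijection restricted to the $G$-level. Finally, that $\phi$ respects the composition series — each $\phi[G_i]=H_i$ — falls out of part~1 already: the leaf-product decomposition means $\hat\phi:T(S)\to T(S')$ is a tree isomorphism, and in $T(S)$ the set $G_i$ is exactly the set of leaves lying under the node $1\cdot G_i$ at level $i$ (the coset of the identity), which $\hat\phi$ must send to the corresponding identity-coset node of $T(S')$, namely $H_i$; alternatively, $G_i$ is recovered group-theoretically from $\phi$ being a homomorphism plus the tree structure, so $\phi[G_i]=H_i$.

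The main obstacle I expect is part~1 — showing the decomposition is forced rather than just possible. The delicate point is ruling out an isomorphism $\theta$ that scrambles the structure in a way consistent with colors and tree-edges but not of leaf-product form: e.g., mapping the $M$-gadget under one leaf pair to the $M$-gadget under a different leaf pair, or permuting the two $T(S)$-factors. I would handle the two-factor issue by noting the factors sit at different levels (the first $T(S)$ occupies levels near the root, the second sits strictly below it), so no level-preserving map can swap them; and the ``wrong leaf pair'' issue by observing that a leaf pair $(x,y)$ is determined by the tree path from the root through it, which $\theta$ preserves as a path in the cone graph. Once the rigidity of the tree skeleton is nailed down, everything else is bookkeeping with \defref{leaf-prod} and \defref{leaf-prod-iso}.
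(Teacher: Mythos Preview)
Your overall architecture is right, and your argument for part~(b) matches the paper's almost exactly. The gap is in part~(a), precisely at the step you flagged as delicate: showing that the second $T(S)$-factor restriction agrees with $\phi$. Your justification --- ``they must, since the second-factor copies are indexed by leaves of the first and $\theta$ is a single map'' --- does not work. As a colored rooted tree, $T(S)\leafprod T(S)\leafprod M$ admits automorphisms that act as the identity on the first $T(S)$ factor but act by an \emph{arbitrary} (and possibly different) tree automorphism on each second-factor copy; nothing in the level structure or the tree path from the root rules this out. So from the tree skeleton alone you only get $\theta(x,y,\la)=(\phi(x),\psi_x(y),\la)$ for some family of tree isomorphisms $\psi_x$ that a priori depend on $x$ and need not equal $\hat\phi$.

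What forces $\psi_x=\hat\phi$ is exactly the cross edges, which you note are preserved but never actually use in part~(a). The paper's device is a ``left-right path'': for $x,y\in G$ there is a \emph{unique} path that goes from the node $x$ down to a $\la$-leaf, across a single cross edge to a $\ra$-leaf, and back up to $y$; that path necessarily passes through $(x,y,\la)$ and $(y,x,\ra)$. Since $\theta$ sends this path to the unique left-right path from $\phi(x)$ to $\phi(y)$ in $X(S')$, one reads off $\theta(x,y,\la)=(\phi(x),\phi(y),\la)$ directly. In other words, the $\la$--$\ra$ edge of the multiplication gadget ties the second coordinate under $x$ to the first-factor image of $y$, and that is the missing ingredient in your plan. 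Once you insert this one use of the cross edges into your part~(a), the rest of your outline goes through and coincides with the paper's proof.
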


\begin{proof}
  First, we prove part (a).  It is clear that $\phi$ is a bijection between $G$ and $H$ that extends uniquely to an isomorphism from $T(S)$ to $T(S')$.  Let $x, y \in G$.  We will say a path from $x$ to $y$ is \emph{left-right} if it  starts at $x$, moves to a node colored ``left'' along tree edges (away from the root), follows a cross edge to a node colored ``right'' and then moves to $y$ along tree edges (towards the root).  Since the only cross edge in $X(S)$ colored $(\text{``left''}, \text{``right''})$ between the subtrees of $T(S) \leafprod T(S) \leafprod M$ rooted at $x$ and $y$ is $((x, y, \la), (y, x, \ra))$, there is exactly one left-right path from $x$ to $y$.  We denote this path by $P(x, y)$.

  Since $\theta$ maps the root of $X(S)$ to the root of $X(S')$, $\theta$ maps left-right paths to left-right paths.  Therefore, $\theta$ sends $P(x, y)$ to $P(\phi(x), \phi(y))$ so the node $(x, y, \la)$ in $X(S)$ is mapped to the node $(\phi(x), \phi(y), \la)$ in $X(S')$.

  For part (b), we let $x, y, z \in G$ such that $xy = z$.  This multiplication rule is represented in $X(S)$ by the path $((x, y, \la), (y, x, \ra), (z, y, =))$.  By part (a), we know that $\theta$ maps this path to $((\phi(x), \phi(y), \la), (\phi(y), \phi(x), \ra), (\phi(z), \phi(y), =))$.  This implies that $\phi(x) \phi(y) = \phi(z)$ in $H$ so that $\phi$ is an isomorphism from $G$ to $H$.

  Let $G_0 = 1 \tril \cdots \tril G_m = G$ and $H_0 = 1 \tril \cdots \tril H_m = H$ be the chains of subgroups in the composition series $S$ and $S'$.  It remains to show that each $\phi[G_i] = H_i$.  Since $\phi$ is an isomorphism from $G$ to $H$, it follows that $\phi(1) = 1$.  This implies that $\theta$ maps each node $G_i$ in $X(S)$ to the node $H_i$ in $X(S')$.  Then because the elements of $G_i$ correspond precisely to those nodes $x \in G$ such that $x$ is a descendant of the node $G_i$ in $T(S) \leafprod T(S) \leafprod M$, it follows that $\phi[G_i] = H_i$.  Thus, $\phi$ is an isomorphism from $S$ to $S'$.
\end{proof}

\begin{theorem}
  \label{thm:X-fff}
  \inpuborpriv{For each pair of composition series $S$ and $S'$, $X_{S, S'}$ is a bijection}{$X : \comp \ra \comptree$ is a fully faithful functor and can be evaluated in polynomial time}.\inpub{  Moreover, both $X(S)$ and $X(\phi)$ where $\phi \in \iso(S, S')$ can be computed in polynomial time.}
\end{theorem}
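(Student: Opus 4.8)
The plan is to assemble what has already been established and to check two running-time bounds. That $X_{S,S'}$ maps $\iso(S,S')$ into $\iso(X(S),X(S'))$ at all is precisely \lemref{X-functor}, so it remains to prove bijectivity and to verify that $X(S)$ and $X(\phi)$ are polynomial-time computable.

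For surjectivity, I would take an arbitrary isomorphism $\theta : X(S) \ra X(S')$ and set $\phi = \restr{\theta}{G}$. Then \lemref{iso-decomp} tells us both that $\theta = \phi \leafprod \phi \leafprod \id_M$ and that $\phi : S \ra S'$ is an isomorphism; but $\phi \leafprod \phi \leafprod \id_M$ is by definition $X(\phi)$, so $\theta = X_{S,S'}(\phi)$ and every element of $\iso(X(S),X(S'))$ is hit. For injectivity it suffices to exhibit a left inverse: I claim $\restr{X(\phi)}{G} = \phi$ for every $\phi \in \iso(S,S')$. Indeed, the nodes of $X(S)$ that are identified with the elements of $G$ are exactly the leaves of the first $T(S)$-factor of $T(S) \leafprod T(S) \leafprod M$, and by \defref{leaf-prod-iso} the map $X(\phi) = \phi \leafprod \phi \leafprod \id_M$ acts on those leaves through $\hat\phi$, whose restriction to the leaves of $T(S)$ is $\phi$ itself. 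Hence $\theta \mapsto \restr{\theta}{G}$ is a left inverse of $X_{S,S'}$, so $X_{S,S'}$ is injective, and together with surjectivity it is a bijection.

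For the cost of constructing $X(S)$: since $\abs{G_i} \ge p^i$ we have $\abs{G/G_i} \le n/p^i$, so the tree $T(S)$ has $\sum_i \abs{G/G_i} = O(n)$ nodes; consequently $T(S) \leafprod T(S) \leafprod M$ has $O(n^2)$ nodes — a copy of (a truncation of) $T(S)$ together with a constant-size copy of $M$ is hung off each of the $n$ leaves of the first factor — and the multiplication gadgets add, for each pair $x,y \in G$, one constant-length path joining already-existing leaf nodes, i.e. $O(n^2)$ extra cross edges. The whole colored graph is produced by a single pass over the subgroup chain of $S$ and the multiplication table of $G$, which is polynomial time. For $X(\phi)$: given $\phi$ as a bijection $G \ra H$ together with the promise $\phi[G_i] = H_i$, extend it to $\hat\phi$ on $T(S)$ in the obvious way and send each node $(x_1,\dots,x_j)$ of $X(S)$ to $(\phi(x_1),\dots,\phi(x_j))$, fixing the $M$-coordinate; this visits each of the $O(n^2)$ nodes once, so it runs in polynomial time, and \lemref{X-functor} already certifies that the output is an isomorphism onto $X(S')$.

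I do not anticipate a genuine obstacle here: the only substantive step, surjectivity, has already been discharged in \lemref{iso-decomp}, whose real content is the ``left-right path'' rigidity argument, and everything else is bookkeeping. The two places that need a little care are confirming that $\restr{X(\phi)}{G}$ is well-defined and equals $\phi$ — so that the injectivity argument does not secretly assume what it proves — and keeping honest track of the $O(n^2)$ size bound on $X(S)$ that underlies the running-time claims (and, in the $p$-group case, the degree-$p+O(1)$ bound needed downstream).
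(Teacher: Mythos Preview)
Your proposal is correct and follows essentially the same approach as the paper: surjectivity via \lemref{iso-decomp}, injectivity by observing that $X(\phi)$ determines $\phi$ on $G$, and polynomial time because leaf products are polynomial-time computable. Your injectivity argument via the left inverse $\theta \mapsto \restr{\theta}{G}$ is just a rephrasing of the paper's one-line observation that $\phi \leafprod \phi \leafprod \id_M = \psi \leafprod \psi \leafprod \id_M$ forces $\phi = \psi$, and your running-time analysis is more explicit than the paper's (which simply notes that $X$ is built from leaf products).
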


\begin{proof}
  Combining Lemmas~\ref{lem:X-functor} and~\ref{lem:iso-decomp} shows that \inpuborpriv{each $X_{S, S'}$ is surjective}{$X$ is a full functor}.  To see that it is \inpuborpriv{injective}{faithful}, we note that if $\phi, \psi \in \iso(S, S')$ and $X(\phi) = X(\psi)$ then $\phi \leafprod \phi \leafprod \id_M = \psi \leafprod \psi \leafprod \id_M$ so $\phi = \psi$.  Since $X$ is defined in terms of leaf products and leaf products can be evaluated in polynomial time, $X$ can also be evaluated in polynomial time.
\end{proof}

The correctness of our reduction follows.

\begin{corollary}
  \label{cor:p-red-cor}
  Let $S$ and $S'$ be composition series.  Then $S \cong S'$ \ifft $X(S) \cong X(S')$.
\end{corollary}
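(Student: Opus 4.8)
The plan is to read the corollary off directly from the structural results already in hand; there is essentially no new content, and the only task is to line up the two directions with \lemref{X-functor} and \lemref{iso-decomp} (equivalently, with \thmref{X-fff}).

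For the forward implication, I would start from an isomorphism of composition series $\phi : S \ra S'$ — that is, an isomorphism of the underlying groups $G \ra H$ carrying each $G_i$ to $H_i$ — and simply invoke \lemref{X-functor}: it tells us that $X(\phi) = \phi \leafprod \phi \leafprod \id_M$ is a well-defined graph isomorphism from $X(S)$ to $X(S')$, so $X(S) \cong X(S')$. For the reverse implication, I would take an arbitrary graph isomorphism $\theta : X(S) \ra X(S')$ and apply \lemref{iso-decomp}: its part (b) says that the restriction $\restr{\theta}{G}$ is an isomorphism of composition series $S \ra S'$, hence $S \cong S'$. Note that this reverse step also silently handles any worry about the two series having different lengths: if $X(S)$ and $X(S')$ are isomorphic at all then \lemref{iso-decomp} produces an isomorphism $G \ra H$, which in particular forces $\abs{G} = \abs{H}$ and makes the chains match up level by level, so nothing needs to be said separately about mismatched lengths.

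I do not expect any real obstacle here — all the work was done in proving \lemref{X-functor} and \lemref{iso-decomp}, and the corollary is just the "object-level" shadow of the "morphism-level" bijection $X_{S,S'} : \iso(S,S') \ra \iso(X(S),X(S'))$ from \thmref{X-fff}: that bijection has nonempty domain precisely when it has nonempty codomain. The one point to be mildly careful about is purely bookkeeping: I should make sure the phrase "isomorphic composition series" in the corollary's statement is interpreted as exactly the notion these lemmas consume and produce (an isomorphism of the ambient groups respecting every subgroup in the chain), which is immediate from the definition given just before \lemref{comp-bound} and restated before \thmref{group-red-comp}. With that identification, the proof is two sentences.
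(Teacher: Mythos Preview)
Your proposal is correct and matches the paper's approach exactly: the paper states this corollary immediately after \thmref{X-fff} with no proof beyond the remark ``The correctness of our reduction follows,'' and you have correctly identified that the two directions drop out of \lemref{X-functor} and \lemref{iso-decomp} (equivalently, from the bijectivity of $X_{S,S'}$ in \thmref{X-fff}).
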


In order to obtain an efficient algorithm for $p$-group composition-series isomorphism, we must show that the degree of the graph is not too large.

\begin{lemma}
  \label{lem:alpha-graph}
  Let $G$ be a group with a composition series $S$ such that $\alpha$ is an upper bound for the order of any factor.  Then the graph $X(S)$ has degree at most $\max\{\alpha + 1, 4\}$ and size $O(n^2)$.
\end{lemma}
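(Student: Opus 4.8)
The plan is to walk through every kind of node appearing in $X(S) = T(S) \leafprod T(S) \leafprod M$ (with the multiplication gadgets attached) and bound the number of edges incident to it, taking the maximum over all node types. Recall from Definitions~\ref{defn:TS}, \ref{defn:leaf-prod} and~\ref{defn:X-S} that the vertices of $X(S)$ fall into four families: (i) internal nodes of the first tree factor $T(S)$, i.e.\ cosets $x G_i$ with $i \geq 1$; (ii) nodes of the form $(x, y G_i)$ with $x \in G$ a leaf of the first factor and $y G_i$ an internal node of the second $T(S)$; (iii) nodes of the form $(x, y, m)$ with $x, y \in G$ and $m$ an internal node of $M$ (the root of $M$ is identified with the leaf $(x,y)$, so this includes the nodes $(x,y)$ themselves); and (iv) the leaves $(x, y, \la)$, $(x,y,\ra)$, $(x,y,=)$ coming from the colored children of $M$'s root, which additionally carry the cross edges of the multiplication gadgets.

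First I would handle the tree edges. A coset node $x G_i$ with $1 \le i \le m$ has one parent edge (to the coset of $G_{i+1}$ containing it, or none if $i = m$, i.e.\ the root) and a number of child edges equal to $[G_i : G_{i-1}]$, which is the order of the composition factor $G_i/G_{i-1}$ and hence at most $\alpha$ by hypothesis; total degree at most $\alpha + 1$. The leaf $x$ of the first $T(S)$ becomes, in the leaf product, the root of a copy of the second tree, so it loses its "leaf" status and instead acquires $[G_{m-1} : G_{m-2}] \le \alpha$ children in the second factor plus its one parent edge from the first factor — again at most $\alpha + 1$. Next, inside the second $T(S)$ factor an internal coset node $(x, yG_i)$ again has degree at most $\alpha + 1$ by the same count, and its leaves $(x,y)$ get identified with the roots of copies of $M$: such a node has one parent edge (within the second tree) and exactly three children in $M$ (the nodes $\la, \ra, =$), for degree $4$. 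The internal root of $M$ contributes nothing new beyond this. Finally, the three special leaves: each of $(x,y,\la)$, $(x,y,\ra)$, $(x,y,=)$ has exactly one tree edge to its parent $(x,y)$, and the gadget path $((x,y,\la),(y,x,\ra),(xy,y,=))$ contributes cross edges: $(x,y,\la)$ lies on exactly one such path (as its left endpoint) giving $+1$; $(y,x,\ra)$ lies on exactly one path as a middle node giving $+2$; $(xy, y, =)$ lies on exactly one path as the right endpoint giving $+1$. Reindexing, a node $(u,v,\la)$ has degree $1+1 = 2$, a node $(u,v,\ra)$ has degree $1 + 2 = 3$, and a node $(u,v,=)$ has degree $1 + 1 = 2$ — here one must check each such triple node appears in exactly one gadget path, which is immediate since $(u,v,\ra)$ forces $x = v, y = u$, and $(u,v,=)$ forces $y = v$ and then $x = uv^{-1}$. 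So the gadget leaves have degree at most $3$.

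Taking the maximum over all cases gives degree at most $\max\{\alpha + 1, 4\}$. For the size bound, $T(S)$ has $\sum_i [G : G_i] \le 2n$ nodes (geometric series since each index grows by a factor $\ge 2$), so $T(S) \leafprod T(S)$ has $O(n^2)$ nodes, attaching a constant-size copy of $M$ to each of the $n^2$ leaves adds $O(n^2)$, and the $n^2$ gadget paths add $O(n^2)$ vertices and edges; hence $X(S)$ has size $O(n^2)$.

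I do not expect a genuine obstacle here — the lemma is a bookkeeping exercise. The one place to be careful is the leaf-product identification: a leaf of the first factor is simultaneously the root of a copy of the second factor (and likewise a leaf of the second factor is the root of a copy of $M$), so its degree is the sum of contributions from both sides and must not be double-counted or, worse, counted as if it still had a "leaf" edge in the lower factor. Making the case analysis exhaustive over the four vertex families above, and verifying the "exactly one gadget path" claim for each gadget-leaf type, is the only thing that needs care.
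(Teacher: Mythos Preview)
Your proposal is correct and follows the same approach as the paper's proof, just in considerably more detail: the paper simply observes that $T(S)$ has degree $\alpha+1$ and size $O(n)$, that $M$ has degree $3$ and size $4$, so the leaf product $T(S)\leafprod T(S)\leafprod M$ has degree $\max\{\alpha+1,4\}$ and size $O(n^2)$, and then notes the gadget edges raise the leaf degrees only to $3$. One small slip to fix: when you discuss a leaf $x$ of the first $T(S)$ becoming the root of a copy of the second $T(S)$, the number of children it acquires is $[G_m:G_{m-1}]$, not $[G_{m-1}:G_{m-2}]$; the bound $\le\alpha$ is unaffected.
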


\begin{proof}
  The tree $T(S)$ has size $O(n)$ and degree $\alpha + 1$ while the tree $M$ has size $4$ and degree $3$.  Therefore $T(S) \leafprod T(S) \leafprod M$ (and hence $X(S)$) has size $O(n^2)$ and degree $\max\{\alpha + 1, 4\}$.  Adding the edges for the multiplication gadgets in $X(S)$ increases the degrees of the leaves of $T(S) \leafprod T(S) \leafprod M$ to at most $3$, so $X(S)$ also has degree $\max\{\alpha + 1, 4\}$.
\end{proof}

We are now in a position to obtain an algorithm for composition-series isomorphism.

\begin{theorem}
  \label{thm:alpha-comp-iso}
  Let $S$ and $S'$ be composition series such that $\alpha$ is an upper bound for the order of any factor.  Then we can test if $S \cong S'$ in $n^{O(\alpha)}$ time.
\end{theorem}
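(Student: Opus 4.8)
The plan is to reduce the problem to low-degree colored graph isomorphism and then invoke the Babai--Luks algorithm. First I would construct the graphs $X(S)$ and $X(S')$ from \defref{X-S}; by \thmref{X-fff} this construction takes polynomial time, since $X$ is defined purely in terms of leaf products of trees and leaf products are polynomial-time computable. Next I would appeal to \corref{p-red-cor}, which states that $S \cong S'$ if and only if $X(S) \cong X(S')$, so it suffices to decide isomorphism of these two graphs.

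By \lemref{alpha-graph}, both $X(S)$ and $X(S')$ have size $O(n^2)$ and degree at most $\max\{\alpha + 1, 4\}$; since every composition factor has order at least $2$ (indeed at least the smallest prime dividing $|G|$), we have $\alpha \geq 2$, so this degree bound is $O(\alpha)$. I would then invoke \thmref{const-deg-can}: canonization of colored graphs on $N$ vertices of degree at most $d$ runs in $N^{O(d)}$ time, and isomorphism testing follows by computing both canonical forms and comparing them. Applying this with $N = O(n^2)$ and $d = O(\alpha)$ yields running time $(n^2)^{O(\alpha)} = n^{O(\alpha)}$, which dominates the polynomial cost of building the graphs. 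This completes the argument.

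I do not expect a genuine obstacle: all the substantive work has been front-loaded into the preceding results---correctness of the reduction (\corref{p-red-cor}), the size and degree bounds (\lemref{alpha-graph}), and the polynomial-time evaluability of $X$ (\thmref{X-fff}). The only two points requiring a moment of care are (i) that $X(S)$ carries vertex colors (``left'', ``right'', ``equals'', ``internal''), so one must invoke the \emph{colored} version of the bounded-degree canonization algorithm as stated in \thmref{const-deg-can} rather than the uncolored one, and (ii) that the vertex count is $\Theta(n^2)$ rather than $\Theta(n)$, which is harmless here only because $n^{O(\alpha)}$ is unchanged under the substitution $n \mapsto n^2$.
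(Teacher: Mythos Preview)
Your proposal is correct and follows essentially the same route as the paper's own proof: build $X(S)$ and $X(S')$ in polynomial time, invoke \corref{p-red-cor} to reduce to graph isomorphism, use \lemref{alpha-graph} for the $O(n^2)$ size and $O(\alpha)$ degree bounds, and finish with \thmref{const-deg-can}. Your added remarks about needing the colored version of the bounded-degree algorithm and about the $n \mapsto n^2$ substitution being harmless are accurate refinements that the paper leaves implicit.
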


\begin{proof}
  We can compute the graphs $X(S)$ and $X(S')$ in polynomial time.  By \corref{p-red-cor}, $S \cong S'$ \ifft $X(S) \cong X(S')$.  By \lemref{alpha-graph}, the number of nodes in $X(S)$ is $O(n^2)$ and the degree is at most $\max\{\alpha + 1, 4\} = O(\alpha)$.  Then we can test if $X(S) \cong X(S')$ in $n^{O(\alpha)}$ time using the bounded-degree graph isomorphism algorithm from \thmref{const-deg-can}.
\end{proof}

\subsection{Canonization}
We also show how to compute canonical forms of composition series.  This result is also useful for further improving the efficiency of the algorithm for $p$-group isomorphism (see~\cite{rosenbaum2013b}).  Our high-level strategy for constructing a canonical form for a composition series $S$ is to compute the canonical form of the graph $X(S)$.  We then reconstruct a composition series $Y(\can_{\graph}(X(S)))$ isomorphic to $S$ by inspecting the structure of $\can_{\graph}(X(S))$.

\begin{definition}
  \label{defn:Y-A}
  For each composition series $S$ for a group $G$ and a graph $A \cong X(S)$,  we fix an arbitrary isomorphism $\pi : X(S) \ra A$.  We define $Y(A)$ to be the composition series $\pi[1] \tril \cdots \tril \pi[G]$ for the group with elements $\pi[G]$, where we define $\pi(x) \pi(y) = \pi(z)$ if there exists a path $(a_{\pi(x)}, a_{\pi(y)}, a_{\pi(z)})$ colored $(\text{``left''}, \text{``right''}, \text{``equals''})$, such that $a_{\pi(x)}$, $a_{\pi(y)}$ and $a_{\pi(z)}$ are descendants of $x$, $y$ and $z$ in the image of the tree $T(S) \leafprod T(S) \leafprod M$ under $\pi$.

  For each pair of composition series $S$ and $S'$ for groups $G$ and $H$, graphs $A \cong X(S)$ and $A' \cong X(S')$, let $\pi : X(S) \ra A$ and $\pi' : X(S') \ra A'$ be the fixed isomorphisms chosen above.  Then we define $Y(\theta) : \pi[G] \ra \pi'[H]$ to be $\restr{\theta}{\pi[G]}$.
\end{definition}

First, we need to show that each $Y(A)$ is well-defined.

\begin{lemma}
  \label{lem:Y-well-def}
  Let $S$ be a composition series, let $A$ be a graph and let $\pi : X(S) \ra A$ be an isomorphism.  Then $Y(A)$ is well-defined composition series that can be computed in polynomial time and $Y(\pi)$ is an isomorphism from $S$ to $Y(A)$.
\end{lemma}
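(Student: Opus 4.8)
The plan is to verify the three assertions in order: well-definedness of $Y(A)$, its computability in polynomial time, and that $Y(\pi)$ is an isomorphism from $S$ to $Y(A)$. For the first claim, I would fix the isomorphism $\pi : X(S) \to A$ (the one chosen in \defref{Y-A}) and unwind the definition. Because $\pi$ is a graph isomorphism, it carries the structural features of $X(S)$ faithfully onto $A$: the image under $\pi$ of the tree $T(S) \leafprod T(S) \leafprod M$ is a well-defined subtree of $A$, the leaf nodes $\pi[G]$ are well-defined, and for each $x,y \in G$ the unique multiplication-gadget path $((x,y,\la),(y,x,\ra),(xy,y,=))$ in $X(S)$ maps to a unique path colored $(\text{``left''},\text{``right''},\text{``equals''})$ in $A$ whose three endpoints are descendants of the images of $x$, $y$ and $xy$. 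Hence the rule $\pi(x)\pi(y) = \pi(z)$ picks out a unique $z$, so the multiplication on $\pi[G]$ is well-defined; since it is the image of the group multiplication on $G$ under the bijection $\pi|_G$, it is a group operation, and $\pi|_G : G \to \pi[G]$ is an isomorphism of groups. The subgroups $\pi[G_0] \tril \cdots \tril \pi[G_m]$ are then the images of the $G_i$; these are recoverable from $A$ as the sets of leaves that are descendants of $\pi$'s image of the node $G_i$, which is the unique node at level $i$ lying above $\pi[1]$ — so the whole composition series $Y(A)$ is determined. This also shows $Y(A) \cong S$ via $\pi|_G$.

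For the polynomial-time claim, I would observe that $A$ has $O(n^2)$ vertices by \lemref{alpha-graph}, and that reconstructing $Y(A)$ from $A$ requires only: identifying the root, reading off the tree structure by a BFS to locate the levels and the leaf set $\pi[G]$, identifying the nodes $\pi[G_i]$ (the ancestors of the distinguished leaf $\pi[1]$ at each level), and for each ordered pair of leaves tracing the colored gadget path to determine the product — a total of $\mathrm{poly}(n)$ work.

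For the final claim, I would take $Y(\pi) = \restr{\pi}{\pi[G]}$ as in \defref{Y-A}; but since we already identified the group operation on $\pi[G]$ as the transported operation, $Y(\pi)$ is exactly $\pi|_G$, which we showed is a group isomorphism sending each $G_i$ to $\pi[G_i]$, i.e. an isomorphism of composition series $S \to Y(A)$. The main obstacle is the first part: one must argue carefully that every structural ingredient used in \defref{Y-A} — the embedded tree, the leaf set, the subgroup nodes, and the gadget paths — is canonically recoverable from $A$ alone (independent of which $\pi$ was fixed, up to the resulting isomorphism), which amounts to checking that $X(S)$ has no ``accidental'' colored paths or symmetries that could confuse the reconstruction; this is essentially the content already established in \lemref{iso-decomp}, which I would invoke to pin down that the left-right(-equals) paths in $X(S)$ are exactly the multiplication gadgets and nothing else.
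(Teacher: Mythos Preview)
Your proposal is correct and follows essentially the same route as the paper's proof: identify the $G$-level nodes by their depth in the tree, read off the group multiplication from the colored gadget paths among their descendants, locate the identity $\pi(1)$ from the multiplication table, and recover each $\pi[G_i]$ as the set of $G$-level descendants of the ancestor of $\pi(1)$ at the appropriate depth. One small terminological slip worth fixing: the nodes $\pi[G]$ are not the leaves of $A$ but sit at depth $m$ (the leaves of the \emph{first} $T(S)$ factor in $T(S)\leafprod T(S)\leafprod M$, which has total height $2m+1$); your argument already implicitly uses this when you speak of gadget endpoints being \emph{descendants} of the images of $x,y,xy$, so the slip is purely verbal.
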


\begin{proof}
  Let $G_0 = 1 \tril \cdots \tril G_m = G$ be the subgroup chain for $S$.  We note that the height of $T(S) \leafprod T(S) \leafprod M$ is $2 m + 1$ where $m$ is the composition length of $S$.  Now, $G$ is the group consisting of the elements at a distance of $m$ from the root so $\pi[G]$ is independent of which isomorphism $\pi : X(S) \ra A$ we consider.  Moreover, we can compute $\pi[G]$ in polynomial time.  For each $x, y, z \in G$, $xy = z$ \ifft there exists a path $((x, y, \la), (y, x, \ra), (z, y, =))$ in $X(S)$.  Equivalently, $xy = z$ \ifft there exists a path $(a_x, a_y, a_z)$ colored $(\text{``left''}, \text{``right''}, \text{``equals''})$ where $a_x$, $a_y$ and $a_z$ are descendants of $x$, $y$ and $z$ in $T(S) \leafprod T(S) \leafprod M$.

  Consider the set of elements $\pi[G]$.  For each $\pi(x), \pi(y), \pi(z) \in \pi[G]$, define $\pi(x) \pi(y) = \pi(z)$ \ifft there exists a path $(a_{\pi(x)}, a_{\pi(y)}, a_{\pi(z)})$ colored $(\text{``left''}, \text{``right''}, \text{``equals''})$ where $a_{\pi(x)}$, $a_{\pi(y)}$ and $a_{\pi(z)}$ are descendants of $\pi(x)$, $\pi(y)$ and $\pi(z)$ in the image of $T(S) \leafprod T(S) \leafprod M$ under $\pi$.  Then $\pi[G]$ is a group that we can compute in polynomial time and $Y(\pi)$ is a group isomorphism from $G$ to $\pi[G]$.

  Now, for each $G_i$, $\pi[G_i]$ consists of the nodes in $\pi[G]$ that are descendants of the node $\pi(G_i)$.  Each node $\pi(G_i)$ is the node on the path from the root of $A$ to $\pi(1)$ at distance $m - i$ from the root.  The node $\pi(1)$ is the identity of the group $\pi[G]$ and can therefore be found by inspecting the multiplication rules of $\pi[G]$.  Thus, we can compute each set of nodes $\pi[G_i]$ in polynomial time independently of $\pi$.  This yields a composition series $\pi[1] \tril \cdots \tril \pi[G]$ that does not depend on the choice of $\pi$.  From \defref{Y-A}, we see that this composition series is in fact $Y(A)$.  Moreover, $Y(\pi)$ is an isomorphism from $S$ to $Y(A)$.
\end{proof}

As for $X$, we define $Y_{A, A'} : \iso(A, A') \ra \iso(Y(A), Y(A'))$ by $\theta \mapsto Y(\theta)$ for each pair of graphs $A, A' \in \comptree$.  In order to compute canonical forms, we shall need to show that \inpuborpriv{each $Y_{A, A'}$ is surjective}{$Y$ is a full functor} and can be evaluated in polynomial time.

\begin{theorem}
  \label{thm:Y-fff}
  \inpuborpriv{For each pair of graphs $A, A' \in \comptree$, $Y_{A, A'}$ is a bijection}{$Y : \comptree \ra \comp$ is a fully faithful functor and can be evaluated in polynomial time}.\inpub{  Moreover, both $Y(A)$ and $Y(\theta)$ where $\theta \in \iso(Y(A), Y(A'))$ can be computed in polynomial time.}
\end{theorem}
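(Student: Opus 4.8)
The plan is to mirror the structure of the proof of Theorem~\ref{thm:X-fff}, but now in the reverse direction, using the two functors $X$ and $Y$ to play off against one another. The key observation is that $Y$ is essentially a one-sided inverse to $X$: for a composition series $S$ and the canonical (or any fixed) isomorphism $\pi : X(S) \to A$, Lemma~\ref{lem:Y-well-def} already gives that $Y(\pi)$ is an isomorphism from $S$ to $Y(A)$, so $Y(A) \cong S$ and in particular $Y(X(S)) \cong S$ canonically via $Y(\id_{X(S)})$, which one checks unwinds to the identity on $G$. So the composite $Y \circ X$ is (naturally isomorphic to) the identity on $\comp$. This is the fact I would establish first, as a short lemma or an opening paragraph, since everything else leverages it.

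Next I would prove fullness (surjectivity of each $Y_{A,A'}$). Given graphs $A \cong X(S)$, $A' \cong X(S')$ with fixed isomorphisms $\pi : X(S) \to A$, $\pi' : X(S') \to A'$, and an arbitrary isomorphism $\theta : A \to A'$, form $\sigma = \inv{(\pi')} \circ \theta \circ \pi : X(S) \to X(S')$. By Theorem~\ref{thm:X-fff} (fullness of $X$, i.e.\ Lemma~\ref{lem:iso-decomp}), $\sigma = \phi \leafprod \phi \leafprod \id_M$ for an isomorphism $\phi : S \to S'$, namely $\phi = \restr{\sigma}{G}$. I then want to show $Y(X(\phi)) \circ (\text{something}) = Y(\theta)$ — more precisely, that $Y(\theta)$ is the isomorphism $\pi[G] \to \pi'[H]$ obtained by transporting $\phi$ along the identifications $Y(\pi)$ and $Y(\pi')$, and that this equals $Y(\psi)$ for a suitable $\psi \in \iso(Y(A), Y(A'))$. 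The cleanest route: observe that $\theta = \pi' \circ \sigma \circ \inv{\pi} = \pi' \circ X(\phi) \circ \inv{\pi}$, and then compute $Y(\theta) = \restr{\theta}{\pi[G]}$ directly. Since $\pi[G]$ consists of the depth-$m$ nodes, $\restr{\theta}{\pi[G]} = \restr{\pi'}{} \circ \restr{X(\phi)}{G} \circ \restr{\inv{\pi}}{\pi[G]} = Y(\pi') \circ \phi \circ \inv{Y(\pi)}$, which is manifestly an isomorphism $Y(A) \to Y(A')$ of composition series (a composite of three isomorphisms, using Lemma~\ref{lem:Y-well-def} for the outer two). Hence every element of $\iso(Y(A), Y(A'))$ of this transported form is hit; but in fact \emph{every} isomorphism $Y(A) \to Y(A')$ arises this way, because conjugating it back by $Y(\pi), Y(\pi')$ gives an isomorphism $S \to S'$, which is $\restr{\sigma}{G}$ for the corresponding $\sigma = X$ of it, and $\theta := \pi' \circ \sigma \circ \inv{\pi}$ maps to it. That closes fullness.

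For faithfulness, suppose $\theta, \theta' : A \to A'$ with $Y(\theta) = Y(\theta')$, i.e.\ $\restr{\theta}{\pi[G]} = \restr{\theta'}{\pi[G]}$. Transporting through $\pi, \pi'$ as above, $\inv{(\pi')}\theta\pi$ and $\inv{(\pi')}\theta'\pi$ are isomorphisms $X(S) \to X(S')$ agreeing on $G = \restr{}{}L(T(S))$ after the leaf identification; by Lemma~\ref{lem:iso-decomp}(a) each is determined by its restriction to $G$ (it equals $\restr{}{G} \leafprod \restr{}{G} \leafprod \id_M$), so the two are equal, whence $\theta = \theta'$. Polynomial-time evaluability of $Y$ is exactly Lemma~\ref{lem:Y-well-def}: $Y(A)$ and $Y(\theta) = \restr{\theta}{\pi[G]}$ are computed by locating the depth-$m$ layer and reading off the multiplication gadgets, all in polynomial time.

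The main obstacle I anticipate is the bookkeeping around the ``fixed isomorphism'' $\pi$ in Definition~\ref{defn:Y-A}: one must be careful that $Y(A)$, $Y(\theta)$, and the transported maps do not secretly depend on the arbitrary choice of $\pi$ in a way that breaks functoriality, and that the identifications $(x,(y,z)) = ((x,y),z)$ underlying leaf products are handled consistently when restricting $\theta$ to the leaf layer. Lemma~\ref{lem:Y-well-def} does the hardest part of this (independence of $Y(A)$ from $\pi$), so the residual work is checking that the $\pi$-dependence in $Y(\theta)$ is exactly cancelled by the $\pi$-dependence in the identification $Y(A) \cong S$ — which is why phrasing everything as ``transport of $\phi$ along $Y(\pi), Y(\pi')$'' is the right organizing principle.
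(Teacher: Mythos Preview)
Your proposal is correct and follows essentially the same approach as the paper: both establish $YX = I_{\comp}$, both transport along the fixed isomorphisms $\pi : X(S) \to A$ and $\pi' : X(S') \to A'$ to reduce bijectivity of $Y_{A,A'}$ to bijectivity of $Y_{X(S),X(S')}$, and both use Lemma~\ref{lem:iso-decomp} for injectivity and Lemma~\ref{lem:Y-well-def} for polynomial-time evaluability. The paper's version is slightly more compressed---it phrases well-definedness via ``$Y$ respects composition'' and then argues bijectivity abstractly from $Y_{X(S),X(S')}$ being a bijection rather than unwinding the transported maps explicitly as you do---but the substance is the same.
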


\begin{proof}
  Let $S$ and $S'$ be composition series with chains of subgroups $G_0 = 1 \tril \cdots \tril G_m = G$ and $H_0 = 1 \tril \cdots \tril H_m = H$, let $A \cong X(S)$ and $A' \cong X(S')$ be graphs, and let $\pi : X(S) \ra A$, $\pi' : X(S) \ra A'$ and $\theta : A \ra A'$ be isomorphisms.

  First, we observe that $Y$ respects composition\inpriv{ and the identity}.  Since $\psi = \theta \pi$ is an isomorphism from $X(S)$ to $A'$, \lemref{Y-well-def} implies that $Y(\psi) = Y(\theta) Y(\pi)$ is an isomorphism from $S$ to $Y(A')$ and that $Y(\pi)$ is an isomorphism from $S$ to $Y(A)$.  It follows that $Y(\theta) = Y(\psi) (Y(\pi))^{-1}$ is an isomorphism from $Y(A)$ to $Y(A')$.  Thus, \inpuborpriv{$Y_{A, A'}$ is a well-defined function}{$Y$ is a well-defined functor}.
  
  To show that \inpuborpriv{$Y_{A, A'}$ is bijective}{$Y$ is fully faithful}, we first note that \inpuborpriv{$YX = I_{\comp}$}{$Y_{X(S), X(S')} X_{S, S'} = I_{\comp}$}, which implies that $Y_{X(S), X(S')}$ is surjective.  \lemref{iso-decomp} implies that $Y_{X(S), X(S')}$ is also injective.  Now, for each $\theta : A \ra A'$, we have $\theta = \pi' \rho \pi^{-1}$ for some isomorphism $\rho : X(S) \ra X(S')$.  Therefore, $Y(\theta) = Y(\pi') Y(\rho) Y(\pi^{-1})$.  Since $Y_{X(S), X(S')}$ is a bijection, we see that $Y_{A, A'}$ is also a bijection.\inpriv{  Thus, $Y$ is fully faithful.}  The fact that $Y$ can be evaluated in polynomial time follows from \defref{Y-A} and \lemref{Y-well-def}.
\end{proof}

\thmref{Y-fff} suffices for our purposes.  However, we note that $X$ and $Y$ form a category equivalence \inpuborpriv{when viewed as functors}{(see \appref{X-Y-equiv})}.\inpub{  All of the results for $X$ and $Y$ in this section are special cases of this more general result.}

To devise an algorithm for composition series canonization, we utilize \inpriv{the functors} $X$ and $Y$ together with the canonical form for graphs of bounded degree \thmref{const-deg-can} (which we denote by $\can_{\graph}$).

\begin{theorem}
  \label{thm:alpha-comp-can}
  The map $Y \circ \can_{\graph} \circ X$ is a canonical form for composition series.  If $S$ is a composition series such that $\alpha$ is an upper bound for the order of any factor, then we can compute $\can_{\comp}(S) = (Y \circ \can_{\graph} \circ X)(S)$ in $n^{O(\alpha)}$ time.
\end{theorem}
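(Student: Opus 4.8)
The plan is to assemble this result entirely from machinery already built in this section, treating it as a routine composition of three maps. First I would argue that $Y \circ \can_{\graph} \circ X$ is a canonical form for composition series. A canonical form must satisfy two properties: (i) it maps each composition series to one isomorphic to it, and (ii) it assigns the same value to isomorphic composition series. For (i): given a composition series $S$, \thmref{X-fff} (and \defref{X-S}) produces the graph $X(S)$; since $\can_{\graph}$ is a canonical form for bounded-degree colored graphs (\thmref{const-deg-can}), $\can_{\graph}(X(S)) \cong X(S)$, so fixing the isomorphism $\pi : X(S) \ra \can_{\graph}(X(S))$ and invoking \lemref{Y-well-def} shows $Y(\can_{\graph}(X(S)))$ is a composition series isomorphic to $S$ via $Y(\pi)$. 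For (ii): if $S \cong S'$, then by \lemref{X-functor} (or \thmref{X-fff}) there is an isomorphism $X(S) \ra X(S')$, hence $X(S) \cong X(S')$ as colored graphs; since $\can_{\graph}$ is a canonical form, $\can_{\graph}(X(S)) = \can_{\graph}(X(S'))$ as labeled graphs, and applying $Y$ to the same graph yields the same composition series. Conversely, if $Y(\can_{\graph}(X(S))) \cong Y(\can_{\graph}(X(S')))$, then by (i) we get $S \cong S'$. Thus equal canonical forms characterize isomorphism, and the map is a canonical form.

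Second I would establish the running time. The input is a composition series $S$ for a group of order $n$ with all factors of order at most $\alpha$. By \lemref{alpha-graph}, $X(S)$ has $O(n^2)$ vertices and degree at most $\max\{\alpha + 1, 4\} = O(\alpha)$, and by \thmref{X-fff} it is computable in time polynomial in $n$. Applying \thmref{const-deg-can} to this graph computes $\can_{\graph}(X(S))$ in time $(n^2)^{O(\alpha)} = n^{O(\alpha)}$. Finally, \lemref{Y-well-def} (and \thmref{Y-fff}) gives that $Y(\can_{\graph}(X(S)))$ is computable in time polynomial in the size of its input graph, which is $n^{O(1)}$. Composing the three stages, the total time is $n^{O(1)} + n^{O(\alpha)} + n^{O(1)} = n^{O(\alpha)}$, as claimed.

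There is essentially no hard step here; the content of the theorem is that the reduction $X$ is degree-preserving (\lemref{alpha-graph}) and that both $X$ and $Y$ are efficiently computable and interact correctly with isomorphism (\thmref{X-fff}, \thmref{Y-fff}, \lemref{Y-well-def}). The one point deserving care is the canonical-form argument: one must be explicit that the composition of two canonical-form-preserving reductions with a genuine canonical form for the intermediate objects is again a canonical form, and in particular that $Y$ applied to the \emph{labeled} output $\can_{\graph}(X(S))$ — not merely its isomorphism class — produces a well-defined composition series. This is exactly what \lemref{Y-well-def} guarantees (the output of $Y(A)$ does not depend on the choice of $\pi : X(S) \ra A$), so the argument goes through. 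I would therefore write the proof as a short two-paragraph assembly: one paragraph verifying the canonical-form property via the round-trip $Y \circ \can_{\graph} \circ X$ and the characterization of isomorphism, and one paragraph tallying the time bound through the three stages using \lemref{alpha-graph}, \thmref{const-deg-can}, \thmref{X-fff}, and \lemref{Y-well-def}.
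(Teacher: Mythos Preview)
Your proposal is correct and follows essentially the same approach as the paper: both verify the canonical-form property by chaining \thmref{X-fff}/\corref{p-red-cor}, the canonical-form property of $\can_{\graph}$, and \thmref{Y-fff}/\lemref{Y-well-def}, and both obtain the time bound from \lemref{alpha-graph} and \thmref{const-deg-can}. The only cosmetic difference is that for the ``non-isomorphic $\Rightarrow$ distinct outputs'' direction the paper argues directly via \thmref{Y-fff} that $X(S)\not\cong X(S')$ forces $Y(\can_{\graph}(X(S)))\not\cong Y(\can_{\graph}(X(S')))$, whereas you take the (slightly slicker) contrapositive using property~(i); both are fine.
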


\begin{proof}
  Let $S$ and $S'$ be composition series.  First, $X(S) \cong \can_{\graph}(X(S))$ by \thmref{X-fff} which implies that $S \cong Y(\can_{\graph}(X(S)))$ by \thmref{Y-fff}.

  If $S \cong S'$ then $X(S) \cong X(S')$ by \corref{p-red-cor} and $\can_{\graph}(X(S)) = \can_{\graph}(X(S'))$ so $Y(\can_{\graph}(X(S))) = Y(\can_{\graph}(X(S')))$.  On the other hand, if $S \not\cong S'$ then $X(S) \not\cong X(S')$ by \thmref{X-fff} and $\can_{\graph}(X(S)) \not\cong \can_{\graph}(X(S'))$ so $Y(\can_{\graph}(X(S))) \not\cong Y(\can_{\graph}(X(S')))$ by \thmref{Y-fff}.  In particular, $Y(\can_{\graph}(X(S))) \not= Y(\can_{\graph}(X(S')))$.  Thus, $Y \circ \can_{\graph} \circ X$ is a canonical form for composition series.

  By Theorems~\ref{thm:X-fff} and~\ref{thm:Y-fff}, $X$ and $Y$ can be evaluated in polynomial time since the graph $\can_{\graph}(X(S))$ has size $O(n^2)$ and degree $\alpha + O(1)$ by \lemref{alpha-graph}.  Then by \thmref{const-deg-can}, computing the canonical form of $X(S)$ takes $n^{O(\alpha)}$ time.
\end{proof}

\section{Algorithms for $p$-group isomorphism and canonization}
\label{sec:p-algorithms}
The intermediate results of Sections~\ref{sec:comp-red} and \ref{sec:graph-red} put us in a position to prove \thmref{p-group-iso}.

\pgroupiso*

\begin{proof}
  Combining Theorems~\ref{thm:group-red-comp} and~\ref{thm:alpha-comp-iso} yields an $n^{(1 / 2) \log_p n + O(p)}$ time algorithm for testing isomorphism of $p$-groups.  On the other hand, every $p$-group has a generating set of size at most $\log_p n$ so the generator-enumeration algorithm runs in $n^{\log_p n + O(1)}$ time for $p$-groups.  Combining these two algorithms shows that $p$-group isomorphism is decidable in $n^{\min\{(1 / 2) \log_p n + O(p),\;\log_p n\}}$ time.

  Let $\alpha = \log n / \log \log n$.  By upper bounding $\min\{(1 / 2) \log_p n + O(p),\;\log_p n\}$ with $(1 / 2) \log_p n + O(p)$ when $p \leq \alpha$ and with $\log_p n$ when $p > \alpha$, we see that $\min\{(1 / 2) \log_p n + O(p),\;\log_p n\}$ is upper bounded by $(1 / 2) \log_p n + O(\log n / \log \log n)$.  The upper bound $(1 / 2) \log n + O(1)$ can be obtained by showing that the maximum of $(1 / 2) \log_p n + O(p)$ for $p \leq \alpha$ is attained at $p = 2$.
\end{proof}

We remark that in the above algorithm relies on the $n^{O(d)}$ algorithm~\cite{babai1983a} for computing canonical forms of graphs of degree $d$ rather than the faster $n^{O(d / \log d)}$ algorithm~\cite{babai1983a,babai1983b} for testing isomorphism of such graphs.  This does not change the result as $\polylog(d)$ factors in the exponent of the graph isomorphism testing procedure require us to chose a different cutoff $\alpha$ in the proof of \thmref{p-group-iso} but do not affect the final result.

We now adapt our algorithm to perform $p$-group canonization.  The main tool we are missing for this result is the ability to compute the canonical form of a $p$-group in $n^{\log_p n + O(1)}$ time.  Given a total order on an alphabet $\Sigma$, define the \emph{standard order} on $\Sigma^*$ by $x \prec y$ if $\abs{x} < \abs{y}$ or $\abs{x} = \abs{y}$ and $x$ comes before $y$ lexicographically.  We adapt the generator-enumeration algorithm to perform canonization using a lemma that orders the elements of a group using a generating set.  We start by defining the ordering.

\begin{definition}
  \label{defn:gen-ord}
  Let $G$ be a group with an ordered generating set $\bmg = (g_1, \ldots, g_k)$.  Define a total order $\prec_{\bmg}$ on $G$ by $x \prec_{\bmg} y$ if $w_{\bmg}(x) \prec w_{\bmg}(y)$ where each $w_{\bmg}(x) = (x_1, \ldots, x_j)$ is the first word in $\{g_1, \ldots, g_k\}^*$ under the standard ordering such that $x = x_1 \cdots x_j$.
\end{definition}

\begin{lemma}
  \label{lem:gen-ord}
  Let $G$ and $H$ be groups with ordered generating sets $\bmg = (g_1, \ldots, g_k)$ and $\bmh = (h_1, \ldots, h_k)$, and let $x, y \in G$.  Then

  \begin{enumerate}
  \item $\prec_{\bmg}$ is a total ordering on $G$.
  \item if $\phi : G \ra H$ is an isomorphism such that each $\phi(g_i) = h_i$, then $x \prec_{\bmg} y$ \ifft $\phi(x) \prec_{\bmh} \phi(y)$.
  \item we can decide if $x \prec_{\bmg} y$ in $O(n \abs{\bmg})$ time.
  \end{enumerate}
\end{lemma}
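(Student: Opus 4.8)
The plan is to handle the three parts in order, since each builds on the previous. For part~(a), the key observation is that the map $x \mapsto w_{\bmg}(x)$ is injective: distinct elements $x \neq y$ cannot have the same first witnessing word, because a word evaluates to a unique group element. Since $\prec_{\bmg}$ is defined by pulling back the standard order on $\{g_1, \ldots, g_k\}^*$ along this injection, and the standard order is a total order on $\Sigma^*$ (finitely many words of each length, ordered lexicographically within a length class), it follows immediately that $\prec_{\bmg}$ is a total order on $G$. I would also note that every element of $G$ has \emph{some} word representing it because $\bmg$ generates $G$, so $w_{\bmg}(x)$ is well-defined for all $x$.

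For part~(b), the crux is that $\phi$ maps words to words compatibly. If $\phi(g_i) = h_i$ for each $i$ and $x = x_1 \cdots x_j$ with each $x_\ell \in \{g_1, \ldots, g_k\}$, then $\phi(x) = \phi(x_1) \cdots \phi(x_j)$ with each $\phi(x_\ell) = h_{i_\ell}$ where $x_\ell = g_{i_\ell}$. So $\phi$ induces a bijection between words over $\{g_1, \ldots, g_k\}$ and words over $\{h_1, \ldots, h_k\}$ that preserves the index sequence, hence preserves length and lexicographic order, hence preserves the standard order. The one thing to check carefully is that this bijection sends $w_{\bmg}(x)$ (the \emph{first} word representing $x$) to $w_{\bmh}(\phi(x))$ (the \emph{first} word representing $\phi(x)$): if some word $v \prec w_{\bmh}(\phi(x))$ represented $\phi(x)$, then its preimage under the word bijection would represent $x$ and come before $w_{\bmg}(x)$, a contradiction, and symmetrically. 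Given that, $w_{\bmg}(x) \prec w_{\bmg}(y) \iff w_{\bmh}(\phi(x)) \prec w_{\bmh}(\phi(y))$, which is exactly the claim.

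For part~(c), I would describe a breadth-first search over $G$: starting from the identity (the empty word) and repeatedly multiplying on the right by each generator $g_1, \ldots, g_k$ in order, one enumerates group elements in exactly $\prec_{\bmg}$-order, discovering each element the first time it appears (which corresponds to its standard-order-least word). Running this search until both $x$ and $y$ have been discovered, and reporting whichever appears first, decides $x \prec_{\bmg} y$. The search visits at most $n$ elements and does $\abs{\bmg}$ group multiplications per element (each an $O(1)$ table lookup plus an $O(1)$ ``already seen'' check with an $n$-indexed array), giving $O(n \abs{\bmg})$ time. I expect part~(c) to be the part most in need of care — not because it is deep, but because one must argue that right-multiplication BFS really does enumerate elements in standard-word order, i.e., that the first time an element is dequeued, the word used to reach it is its $\prec$-minimal representative; this follows because BFS processes words in standard order and the first word reaching a given element is therefore that element's least word. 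Parts~(a) and~(b) are essentially bookkeeping once the word-to-element and word-to-word correspondences are set up cleanly.
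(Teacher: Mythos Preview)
Your proposal is correct and follows essentially the same approach as the paper: injectivity of $w_{\bmg}$ for part~(a), $\phi$ transporting minimal words to minimal words for part~(b), and breadth-first search on the Cayley graph for part~(c). The only cosmetic difference is that the paper explicitly records $w_{\bmg}(x)$ at each BFS layer (choosing the lexicographically least extension from the previous layer) rather than merely tracking discovery order, but these are equivalent.
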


\begin{proof}
  Let $S = \{g_1, \ldots, g_k\}$.  For part (a), it is clear that $\prec_{\bmg}$ is a total order since $w_{\bmg} : G \ra S^*$ is clearly injective and the standard ordering on $S^*$ is a total order.

  For part (b), consider an isomorphism $\phi : G \ra H$ such that each $\phi(g_i) = h_i$.  Then if $w_{\bmg}(x) = (x_1, \ldots, x_j)$, $w_{\bmh}(\phi(x)) = (\phi(x_1), \ldots, \phi(x_j))$.  Thus, $x \prec_{\bmg} y$ \ifft $w_{\bmg}(x) \prec w_{\bmg}(y)$ \ifft $w_{\bmh}(\phi(x)) \prec w_{\bmh}(\phi(y))$ \ifft $x \prec_{\bmh} y$.

  For part (c), it suffices to show how to compute $w_{\bmg}(x)$ in polynomial time.  Consider the Cayley graph $\cay(G, S)$ for the group $G$ with generating set $S$.  Then the word $w_{\bmg}(x)$ corresponds to the edges in the minimum length path from $1$ to $x$ in $\cay(G, S)$ that comes first lexicographically.  We can find this path in $O(n \abs{\bmg})$ time by visiting the nodes in breadth-first order starting with $1$.  At the \nth{j} stage, we know $w_{\bmg}(y)$ for all $y \in G$ at a distance of at most $j$ from the root.  We then compute $w_{\bmg}(x)$ for each $x$ at a distance of $j + 1$ from the root by selecting the minimal word $w_{\bmg}(x) \concat g_{x, y}$ over all edges $(x, y)$ associated with an element $g_{x, y}$ of $S$.
\end{proof}

We utilize this order to permute the rows and columns of the multiplication table of the group.

\begin{definition}
  \label{defn:gen-mull-table}
  Let $G$ be a group and let $\bmg$ be an ordered generating set for $G$.  We relabel each element of $G$ by its position in the ordering $\prec_{\bmg}$.  We then permute the rows and columns of the resulting multiplication table so that the elements for the rows and columns appear in the order $1, \ldots, n$ and denote the result by $M_{\bmg}$.
\end{definition}

Clearly, $M_{\bmg}$ defines a group isomorphic to $G$.  The following lemma provides a means of adapting the generator-enumeration algorithm to group canonization.

\begin{lemma}
  \label{lem:gen-mull-table-perm}
  Let $G$ and $H$ be groups, let $\cG_\ell$ and $\cH_\ell$ be the collections of all ordered generating sets of $G$ and $H$ of size at most $\ell$, and define $M_\ell(G) = \setb{M_{\bmg}}{\bmg \in \cG_\ell}$.  Then

  \begin{enumerate}
  \item If $G \not\cong H$, then $M_\ell(G) \cap M_\ell(H) = \emptyset$.
  \item If $G \cong H$, then $M_\ell(G) = M_\ell(H)$.
  \end{enumerate}
\end{lemma}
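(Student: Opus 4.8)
The plan is to prove both parts essentially simultaneously by analyzing when $M_{\bmg} = M_{\bmh}$ for ordered generating sets $\bmg$ of $G$ and $\bmh$ of $H$. The key claim I would establish first is: if $\bmg = (g_1, \ldots, g_k)$ and $\bmh = (h_1, \ldots, h_k)$ have the same length $k$ and the map $g_i \mapsto h_i$ extends to an isomorphism $\phi : G \ra H$, then $M_{\bmg} = M_{\bmh}$. Indeed, by \lemref{gen-ord}(b), $\phi$ is order-preserving from $(\prec_{\bmg})$ to $(\prec_{\bmh})$, so when each group's elements are relabeled by position in its respective order, $\phi$ becomes the identity relabeling; hence the permuted multiplication tables coincide. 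Conversely, if $M_{\bmg} = M_{\bmh}$ (so in particular $|\bmg|$ and $|\bmh|$ give tables of the same dimension $n$, forcing $|G| = |H|$), then reading off the relabeled tables gives a bijection $G \ra H$ sending the $i$-th element of $G$ under $\prec_{\bmg}$ to the $i$-th element of $H$ under $\prec_{\bmh}$, and since the tables are literally equal this bijection is an isomorphism; so $G \cong H$.

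With this claim in hand, part (a) is immediate: if $M_\ell(G) \cap M_\ell(H) \ne \emptyset$, pick $M_{\bmg} = M_{\bmh}$ in the intersection with $\bmg \in \cG_\ell$, $\bmh \in \cH_\ell$; the converse direction of the claim yields $G \cong H$, contrapositively giving (a).

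For part (b), suppose $\psi : G \ra H$ is an isomorphism. I would show $M_\ell(G) \subseteq M_\ell(H)$ (the reverse inclusion follows by symmetry using $\psi^{-1}$). Take any $\bmg = (g_1, \ldots, g_k) \in \cG_\ell$. Set $h_i = \psi(g_i)$ and $\bmh = (h_1, \ldots, h_k)$; then $\bmh$ is an ordered generating set for $H$ of the same size $k \le \ell$, so $\bmh \in \cH_\ell$, and $g_i \mapsto h_i$ extends to the isomorphism $\psi$. By the forward direction of the claim, $M_{\bmg} = M_{\bmh} \in M_\ell(H)$. This proves $M_\ell(G) \subseteq M_\ell(H)$, and by symmetry $M_\ell(G) = M_\ell(H)$.

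The only real subtlety — the step I would be most careful about — is the converse direction of the claim: arguing that equality of the two relabeled-and-permuted tables forces the induced bijection to be a genuine group isomorphism, and in particular that the relabeling is well-defined (each element occupies a unique position, which is exactly \lemref{gen-ord}(a)) and that the dimensions matching forces $|G| = |H|$ before one can even speak of comparing the tables entrywise. Everything else is bookkeeping: tracking how the two relabelings interact through $\phi$ or $\psi$, which is precisely what \lemref{gen-ord}(b) is designed to handle.
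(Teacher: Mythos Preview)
Your proposal is correct and follows essentially the same approach as the paper: for part (b) you fix an isomorphism, push an ordered generating set $\bmg$ across to $\psi(\bmg)$, and invoke \lemref{gen-ord}(b) to conclude $M_{\bmg} = M_{\psi(\bmg)}$, exactly as the paper does; for part (a) your converse direction is a slightly more explicit version of the paper's one-line observation that $M_{\bmg}$ defines a group isomorphic to $G$ (so a common table forces $G \cong H$).
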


\begin{proof}
  For part (a), suppose $G \not\cong H$ but $M \in M_\ell(G) \cap M_\ell(H)$.  Then $G$ would be isomorphic to the group defined by the multiplication table $M$ which is also isomorphic to $H$.

  For part (b), fix an isomorphism $\phi : G \ra H$.  We claim that $M_{\bmg} = M_{\phi(\bmg)}$ for each $\bmg \in \cG_\ell$.  We know from \lemref{gen-ord} that for $x, y \in G$, $x \prec_{\bmg} y$ \ifft $\phi(x) \prec_{\phi(\bmg)} \phi(y)$.  Since $\phi(x) \phi(y) = \phi(x y)$, it follows that $M_{\bmg} = M_{\phi(\bmg)}$.  Therefore, $M_\ell(G) = M_\ell(H)$.
\end{proof}

\inpub{Recall that the rank of a group is the size of a minimal generating set.}
 
\begin{corollary}
  \label{cor:gen-enum-can}
  Let $G$ be a group.  Then we can compute a canonical form for $G$ in $n^{\rank(G) + O(1)}$ time.
\end{corollary}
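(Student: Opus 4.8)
The plan is to adapt the generator-enumeration algorithm to produce canonical forms, relying on \lemref{gen-mull-table-perm} to certify correctness. First I would compute $\ell = \rank(G)$ by trying $\ell = 1, 2, 3, \ldots$ in turn and, for each value, enumerating all $n^\ell$ ordered $\ell$-tuples of elements of $G$ and testing whether any of them generates $G$ (for instance by computing the closure of the tuple under the group operation, which takes $\mathrm{poly}(n)$ time); the least $\ell$ for which such a tuple exists is $\rank(G)$. This search costs $\sum_{\ell = 1}^{\rank(G)} n^{\ell} \cdot \mathrm{poly}(n) = n^{\rank(G) + O(1)}$ time, dominated by its last stage. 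With $\ell = \rank(G)$ now fixed, iterate over all $\bmg \in \cG_\ell$ — of which there are at most $n^{\rank(G) + O(1)}$ — and for each compute the multiplication table $M_{\bmg}$ of \defref{gen-mull-table}. By \lemref{gen-ord}(c), a single breadth-first search of $\cay(G, \{g_1, \ldots, g_\ell\})$ yields $w_{\bmg}(x)$, hence the position of $x$ in $\prec_{\bmg}$, for every $x \in G$ in polynomial time, after which relabeling and permuting the rows and columns of the multiplication table of $G$ also takes polynomial time. Define $\can_{\grp}(G)$ to be the lexicographically least matrix in $M_\ell(G) = \setb{M_{\bmg}}{\bmg \in \cG_\ell}$. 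The entire computation runs in $n^{\rank(G) + O(1)}$ time.

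It then remains only to verify that $\can_{\grp}$ is a canonical form. Since each $M_{\bmg}$ is the multiplication table of a group isomorphic to $G$ (as noted after \defref{gen-mull-table}), so is $\can_{\grp}(G)$. If $G \cong H$, then $\rank$ being an isomorphism invariant forces the exponent $\ell = \rank(G) = \rank(H)$ used in the two runs to agree, and \lemref{gen-mull-table-perm}(b) gives $M_\ell(G) = M_\ell(H)$; hence the lexicographically least elements of these sets coincide, so $\can_{\grp}(G) = \can_{\grp}(H)$. Conversely, if $G \not\cong H$ and we had $\can_{\grp}(G) = \can_{\grp}(H) = M$, then $G$ would be isomorphic to the group defined by $M$, which is in turn isomorphic to $H$ — contradicting $G \not\cong H$ (this is precisely the content of \lemref{gen-mull-table-perm}(a) when $\rank(G) = \rank(H)$, and follows the same way otherwise). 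Thus $\can_{\grp}(G) \neq \can_{\grp}(H)$, and $\can_{\grp}$ is a canonical form computable in $n^{\rank(G) + O(1)}$ time.

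There is no genuine obstacle here: the conceptual content has already been extracted into \lemref{gen-ord} and \lemref{gen-mull-table-perm}, and what remains is assembly and bookkeeping. The only points that deserve a moment's care are (i) that the procedure must settle on the same exponent $\ell$ for isomorphic inputs, which is immediate because $\rank$ is an isomorphism invariant, and (ii) the cost of the preliminary search for $\rank(G)$, which is a geometric sum dominated by its final term and therefore stays within the target bound $n^{\rank(G) + O(1)}$.
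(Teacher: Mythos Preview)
Your proposal is correct and follows essentially the same approach as the paper: compute $\rank(G)$ by brute force, enumerate $\cG_{\rank(G)}$, take the lexicographically least $M_{\bmg}$, and invoke \lemref{gen-mull-table-perm} for correctness. The paper's proof is terser (it leaves the rank computation, the construction of each $M_{\bmg}$, and the canonical-form verification implicit), but the underlying argument is identical to yours.
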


\begin{proof}
  We first determine the rank of $G$ in $n^{\rank(G) + O(1)}$ time by brute force.  Then we compute the set $\cG_{\rank(G)}$ and choose $\can_{\grp}(G)$ to be the element that comes first lexicographically.  The fact that the map defined by this computation is a canonical form is immediate from \lemref{gen-mull-table-perm}.
\end{proof}

It is now easy to adapt \thmref{p-group-iso} to perform $p$-group canonization.

\begin{theorem}
  \label{thm:p-group-can}
  $p$-group canonization is in $n^{\min\{(1 / 2) \log_p n + O(p),\;\log_p n\}}$ time.
\end{theorem}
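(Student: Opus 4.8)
The plan is to combine the canonization analogues of the two halves of the proof of \thmref{p-group-iso}. For a $p$-group $G$ of order $n$, every composition factor is cyclic of order $p$, so we may take $\alpha = p$ in \thmref{alpha-comp-can}: composition-series canonization for $G$ runs in $n^{O(p)}$ time. By \thmref{group-red-comp-can}, computing $\can_{\grp}(G)$ is $n^{(1/2)\log_p n + O(1)}$-time Turing reducible to computing canonical forms of composition series for $G$; each of the at most $n^{(1/2)\log_p n + O(1)}$ such series costs $n^{O(p)}$ by \thmref{alpha-comp-can}, so this branch produces a canonical form in $n^{(1/2)\log_p n + O(p)}$ time. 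Concretely, one uses \lemref{comp-bound} to enumerate all composition series, applies $Y \circ \can_{\graph} \circ X$ to each, extracts the multiplication table, and takes the lexicographically least one, exactly as in the proof of \thmref{group-red-comp-can}.

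For the second branch, recall that a $p$-group of order $n$ has rank at most $\log_p n$ (indeed $p^{\rank(G)}$ divides $n$). Hence \corref{gen-enum-can} yields a canonical form for $G$ in $n^{\rank(G) + O(1)} = n^{\log_p n + O(1)}$ time.

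Since each branch individually computes a canonical form and the choice of branch depends only on $p$ and $n$, which are isomorphism invariants, the hybrid that compares the two exponents and runs the faster branch is itself a canonical form for $p$-groups. This gives $p$-group canonization in $n^{\min\{(1/2)\log_p n + O(p),\; \log_p n\}}$ time, as claimed. (As in \thmref{p-group-iso}, setting the cutoff $\alpha = \log n / \log\log n$ and noting that $(1/2)\log_p n + O(p)$ is maximized at $p = 2$ over $p \leq \alpha$ further yields $n^{(1/2)\log_p n + O(\log n / \log\log n)}$ and $n^{(1/2)\log n + O(1)}$ as upper bounds, though this refinement is not needed for the stated bound.)

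I do not anticipate a substantial obstacle: the two canonization engines, \thmref{alpha-comp-can} via $Y \circ \can_{\graph} \circ X$ and \corref{gen-enum-can} via generator enumeration, are already established, and \thmref{group-red-comp-can} already packages the reduction from group canonization to composition-series canonization. The only points requiring (routine) care are verifying that a $p$-group has all composition factors of order exactly $p$ and rank at most $\log_p n$, and confirming that selecting the algorithm based solely on $p$ and $n$ preserves the canonical-form property of the output.
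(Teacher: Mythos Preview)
Your proposal is correct and matches the paper's proof essentially line for line: combine \thmref{group-red-comp-can} with \thmref{alpha-comp-can} (taking $\alpha = p$) for the $n^{(1/2)\log_p n + O(p)}$ branch, invoke \corref{gen-enum-can} with $\rank(G) \leq \log_p n$ for the $n^{\log_p n + O(1)}$ branch, and then pick the faster of the two. Your explicit remark that the branch selection depends only on the isomorphism invariants $p$ and $n$ (hence preserves canonicity) is a detail the paper leaves implicit by pointing to the proof of \thmref{p-group-iso}.
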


\begin{proof}
  Let $G$ be a $p$-group.  Combining Theorems~\ref{thm:group-red-comp-can} and~\ref{thm:alpha-comp-can} yields an $n^{(1 / 2) \log_p n + O(p)}$ time algorithm for group canonization while \corref{gen-enum-can} gives an $n^{\log_p n + O(1)}$ time algorithm.  The result then follows from the same argument used in the proof of \thmref{p-group-iso}.
\end{proof}

\section*{Acknowledgements}
We thank Laci Babai for suggesting the simplified composition series reduction and other comments, Paul Beame and Aram Harrow for useful discussions and feedback, Joshua Grochow for additional references, and Paolo Codenotti, Johannes K\"{o}bler, Youming Qiao, Jacobo Tor\'{a}n and the anonymous reviewers for helpful comments.  Part of this work was completed while David Rosenbaum was a long term visitor at the Center for Theoretical Physics at the Massachusetts Institute of Technology.  David Rosenbaum was funded by the DoD AFOSR through an NDSEG fellowship; partial support was provided by the NSF under grant CCF-0916400.

\inpriv{\inlong{
    \appendix
    \newpage
    \section{$X$ and $Y$ are a category equivalence}
    \label{app:X-Y-equiv}
    \begin{theorem}
  The functors $X : \comp \ra \comptree$ and $Y : \comptree \ra \comp$ form a category equivalence.
\end{theorem}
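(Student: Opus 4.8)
The plan is to verify the two defining properties of a category equivalence, namely that $YX$ is naturally isomorphic to $I_{\comp}$ and that $XY$ is naturally isomorphic to $I_{\comptree}$, using the structural facts already established: $X$ and $Y$ are functors (\lemref{X-functor}, \thmref{X-fff}, \thmref{Y-fff}), both are fully faithful (\thmref{X-fff}, \thmref{Y-fff}), and $\comptree$ consists precisely of the graphs isomorphic to some $X(S)$, so that $X$ is essentially surjective.

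First I would handle $YX$. The only freedom in \defref{Y-A} is the ``arbitrary'' choice of an isomorphism $\pi : X(S) \ra A$ for each $A \cong X(S)$; I would fix it coherently by taking $\pi = \id_{X(S)}$ whenever $A = X(S)$ (legitimate since $X$ is injective on objects: the vertex set and gadget edges of $X(S)$ encode $G$ together with its subgroup chain and multiplication). With this choice, unwinding \defref{Y-A} and \lemref{Y-well-def} gives $Y(X(S)) = S$ on objects and $Y(X(\phi)) = \restr{(\phi \leafprod \phi \leafprod \id_M)}{G} = \phi$ on morphisms, so $YX = I_{\comp}$ on the nose; this is the identity already recorded in the proof of \thmref{Y-fff}.

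Next I would exhibit a natural isomorphism $\nu : I_{\comptree} \Ra XY$. For $A \in \comptree$ with its chosen $\pi_A : X(S_A) \ra A$, \lemref{Y-well-def} says $Y(\pi_A) : S_A \ra Y(A)$ is an isomorphism, hence so is $X(Y(\pi_A)) : X(S_A) \ra X(Y(A))$, and I set $\nu_A = X(Y(\pi_A)) \circ \pi_A^{-1} : A \ra X(Y(A))$, manifestly an isomorphism. For naturality against $\theta : A \ra A'$, write $\rho = \pi_{A'}^{-1} \circ \theta \circ \pi_A : X(S_A) \ra X(S_{A'})$; expanding both sides of the naturality square $X(Y(\theta)) \circ \nu_A = \nu_{A'} \circ \theta$ and using that $X$ and $Y$ respect composition reduces it to the identity $X(Y(\rho)) = \rho$. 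Since $\rho$ is an isomorphism between objects lying in the image of $X$, fullness of $X$ (\thmref{X-fff}, via \lemref{iso-decomp}) yields $\rho = X(\phi)$ for an isomorphism $\phi : S_A \ra S_{A'}$, whence $X(Y(\rho)) = X(Y(X(\phi))) = X(\phi) = \rho$ by $YX = I_{\comp}$. As each $\nu_A$ is an isomorphism, $\nu$ is a natural isomorphism; together with $YX = I_{\comp}$ this proves that $X$ and $Y$ form a category equivalence.

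The only real obstacle I anticipate is the bookkeeping around the choice of the $\pi$'s in \defref{Y-A}: one must make it coherently so that $YX$ is genuinely the identity (not merely isomorphic to it), after which the naturality verification collapses to $X(Y(\rho)) = \rho$ and is routine. If one prefers to sidestep the explicit $\nu$, a purely formal shortcut is available: $X$ is fully faithful and essentially surjective, hence an equivalence with some quasi-inverse $F'$, and then $Y \cong YXF' \cong I_{\comp}F' = F'$, so $Y$ is forced to be a quasi-inverse of $X$ and in particular $XY \cong I_{\comptree}$.
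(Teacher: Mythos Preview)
Your proposal is correct and follows essentially the same route as the paper. Both arguments first observe $YX = I_{\comp}$ on the nose, then build a natural isomorphism between $XY$ and $I_{\comptree}$ by choosing for each $A$ an isomorphism to something in the image of $X$ and exploiting that $XY$ acts as the identity on that image; the only cosmetic differences are that the paper picks an arbitrary $\pi_A : XY(A)\to A$ and corrects it to $\eta_A = \pi_A \circ XY(\pi_A^{-1})$ so that $XY(\eta_A)=\id$, whereas you reuse the $\pi_A : X(S_A)\to A$ already fixed in \defref{Y-A} and reduce naturality to $X(Y(\rho))=\rho$ on $\mathrm{im}\,X$ --- both verifications boil down to the same fact, and your closing ``fully faithful $+$ essentially surjective'' shortcut is exactly the abstract version of what the paper carries out by hand.
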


\begin{proof}
  We start by noting that $YX = I_{\comp}$ so $YX$ is naturally isomorphic to the identity functor.  Let $A \in \comptree$ be a graph and let $S$ be a composition series such that $A \cong X(S)$.  Then $XY(A) \cong XYX(S)$ so $XY(A) \cong X(S)$ and $A \cong XY(A)$.  Then for each graph $A \in \comptree$, choose an isomorphism $\pi_A : XY(A) \ra A$.  We note that $XYXY = XY$ so that $XY(\pi) : XY(A) \ra XY(A)$ is an automorphism.  Define $\eta_A = \pi_A \circ XY(\pi^{-1})$; we claim that $\eta : XY \ra I_{\comptree}$ is a natural isomorphism.  Clearly, each $\eta_A$ is an isomorphism.  We need to show that for each $A, A' \in \comptree$ and every isomorphism $\theta : A \ra A'$, $\eta_{A'} \circ XY(\theta) = \theta \circ \eta_A$.  Since $XY$ is fully faithful by Theorems~\ref{thm:X-fff} and~\ref{thm:Y-fff}, this is equivalent to $XY(\eta_{A'}) \circ XYXY(\theta) = XY(\theta) \circ XY(\eta_A)$ which holds since $XY(\eta_A) = \id_{XY(A)}$.
\end{proof}
}}

\inlong{\newpage}
\bibliographystyle{initials}
\bibliography{$HOME/LaTeX/computer-science-references,$HOME/LaTeX/math-references,$HOME/LaTeX/quantum-computing-references} 

\end{document}